\renewcommand\footnotetextcopyrightpermission[1]{} 
\def\BibTeX{{\rm B\kern-.05em{\sc i\kern-.025em b}\kern-.08emT\kern-.1667em\lower.7ex\hbox{E}\kern-.125emX}}
\renewcommand{\footnotesize}{\small}
\newcommand{\allattr}{{\mathbb A}}
\newcommand{\aval}{{\mathbb a}}
\newcommand{\attrset}{{\mathbb B}}
\newcommand{\allrel}{{\mathbb R}}
\newcommand{\attr}{{\texttt{attr}}}
\newcommand{\rel}{{\texttt{rels}}}
\newcommand{\dom}{{\texttt{dom}}}
\newcommand{\ourprob}{{\texttt{GDP}}}
\newcommand{\true}{{\texttt{true}}}
\newcommand{\false}{{\texttt{false}}}
\newcommand{\isptime}{{\textsc{IsPtime}}}
\newcommand{\caseempty}{{\texttt{EmptyQuery-1}}} 
\newcommand{\casesingle}{{\texttt{\bf SingleRelation-2}}}
\newcommand{\casetwo}{{\texttt{\bf TwoRelations-3}}}
\newcommand{\casesubset}{{\texttt{\bf Subset-4}}}
\newcommand{\casecommon}{{\texttt{\bf CommonAttribute-5}}}
\newcommand{\casecooccur}{{\texttt{\bf CoOccurrence-6}}}
\newcommand{\casedecompose}{{\texttt{\bf Decomposition-7}}}
\newcommand{\rectree}{{{recursion-tree}}}
\newcommand{\optcost}{{\textsc{OptCost}}}
\newcommand{\optsol}{{\textsc{OptSol}}}
\newcommand{\ceil}[1]{{\lceil #1 \rceil}}
\newcommand{\computeopt}{{\textsc{ComputeOpt}}}
\newcommand{\singlerel}{{\textsc{SingleRelation}}}
\newcommand{\tworel}{{\textsc{TwoRelations}}}
\newcommand{\onesubset}{{\textsc{OneSubset}}}
\newcommand{\commonattr}{{\textsc{CommonAttrPartition}}}
\newcommand{\cooccur}{{\textsc{CoOccurrence}}}
\newcommand{\decompose}{{\textsc{DecompCrossProduct}}}
\definecolor{black}{rgb}{0,0,0}
\definecolor{grey}{rgb}{0.8,0.8,0.8}
\definecolor{red}{rgb}{1,0,0}
\definecolor{green}{rgb}{0,1,0}
\definecolor{darkgreen}{rgb}{0,0.5,0}
\definecolor{darkpurple}{rgb}{0.5,0,0.5}
\definecolor{darkdarkpurple}{rgb}{0.3,0,0.3}
\definecolor{blue}{rgb}{0,0,1}
\definecolor{shadegreen}{rgb}{0.95,1,0.95}
\definecolor{shadeblue}{rgb}{0.95,0.95,1}
\definecolor{shadered}{rgb}{1,0.85,0.85}
\definecolor{shadegrey}{rgb}{0.85,0.85,0.85}
\definecolor{oddRowGrey}{rgb}{0.80,0.80,0.80}
\definecolor{evenRowGrey}{rgb}{0.85,0.85,0.85}
\newcommand{\red}[1]{{\color{red} #1}}
\newcommand{\cut}[1]{{}}
\DeclareMathAlphabet{\mathbbold}{U}{bbold}{m}{n}
\theoremstyle{definition}
\newtheorem{definition}{Definition}[section]
\theoremstyle{example}
\newtheorem{example}{Example}[section]
\newtheorem{theorem}{Theorem}[section]          	
\newaliascnt{lemma}{theorem}				
\newtheorem{lemma}[lemma]{Lemma}              	
\newaliascnt{conjecture}{theorem}			
\newaliascnt{remark}{theorem}				
\newaliascnt{corollary}{theorem}			
\newaliascnt{definition}{theorem}			
\newtheorem{definition}[definition]{Definition}    
\newaliascnt{proposition}{theorem}			
\newaliascnt{example}{theorem}			
\newtheorem{example}[example]{Example}  	
\newaliascnt{observation}{theorem}			
\newcommand{\eg}{{e.g.}}
\newcommand{\ie}{{i.e.}}
\begin{document}
\definecolor{myblue}{RGB}{80,80,160}
\definecolor{mygreen}{RGB}{80,160,80}

\title{Generalized Deletion Propagation on Counting Conjunctive Query Answers}

\author{Debmalya Panigrahi}
    \affiliation{%
		\institution{Duke University}
    }
\email{debmalya@cs.duke.edu}

\author{Shweta Patwa}
    \affiliation{%
		\institution{Duke University}
    }
\email{sjpatwa@cs.duke.edu}

\author{Sudeepa Roy}
    \affiliation{%
		\institution{Duke University}
    }
\email{sudeepa@cs.duke.edu}

\renewcommand{\shortauthors}{D. Panigrahi, S. Patwa, and S. Roy}

\cut{
\begin{CCSXML}
<ccs2012>
<concept>
<concept_id>10002951.10002952.10003212.10003213</concept_id>
<concept_desc>Information systems~Database utilities and tools</concept_desc>
<concept_significance>300</concept_significance>
</concept>
</ccs2012>
\end{CCSXML}

\ccsdesc[300]{Information systems~Database utilities and tools}

%
\keywords{Data provenance; Explanations; Relational query grading}
}
\begin{abstract}
We investigate the computational complexity of minimizing the source side-effect in order to remove a given number of tuples from the output of a conjunctive query. In particular, given a multi-relational database $D$, a conjunctive query $Q$, and a positive integer $k$ as input, the goal is to find a minimum subset of input tuples to remove from D that would eliminate at least $k$ output tuples from $Q(D)$. This problem generalizes the well-studied deletion propagation problem in databases. In addition, it encapsulates the notion of intervention for aggregate queries used in data analysis with applications to explaining interesting observations on the output. We show a dichotomy in the complexity of this problem for the class of full conjunctive queries without self-joins by giving a characterization on the structure of $Q$ that makes the problem either polynomial-time solvable or NP-hard. Our proof of this dichotomy result already gives an exact algorithm in the easy cases; we complement this by giving an approximation algorithm for the hard cases of the problem.
\end{abstract}




\maketitle

\begin{sloppypar}

\section{Introduction}
\label{sec:introduction}
The problem of \emph{view update} (\eg, \cite{Bancilhon+1981, Dayal+1982}) -- how to change the input to achieve some desired changes to a \emph{view} or query output -- is a well-studied problem in the database literature. This arises in contexts where the user is interested in tuning the output to meet her prior expectation, satisfy some external constraint, or examine different possibilities. In these cases, she would want to know whether and how the input can be changed to achieve the desired effect in the output.
\par
One special case of view update that has been studied from a theoretical standpoint is \emph{deletion propagation}, which was first analyzed by Buneman, Khanna, and  Tan \cite{Buneman+2002}. Given a database $D$ and a monotone query $Q$, and a designated output tuple $t \in Q(D)$, the goal is to remove $t$ from $Q(D)$ by removing input tuples from $D$ subject to two alternative optimization criteria. In the \emph{source side-effect} version, the goal is to remove $t$ from $Q(D)$ by removing the smallest number of input tuples from $D$, whereas in the \emph{view side-effect} version, the goal is to remove $t$ such that the number of other output tuples deleted from $Q(D)$ is minimized. The intuition is that if the user considers tuple $t$ to be erroneous, then she would want to remove it from the output in a way that is minimal in terms of her intervention on the input, or in terms of the disruption caused in the output.
\par
In this paper, we study the \emph{Generalized Deletion Propagation} problem (\ourprob), where given $Q$ and $D$, instead of removing a designated tuple $t$ from $Q(D)$, the goal is to remove at least $k$ tuples from $Q(D)$ for a given integer $k$. We are interested in studying this problem from the perspective of minimizing the source side-effect, i.e., we want to remove $k$ output tuples by removing the smallest number of input tuples from $D$.  

Consider an example where an airline has flights from a set of northern locations  to a set of central locations stored in a relation ${\tt R_{nc}(north, central)}$, and also from a set of central locations to a set of southern locations stored in ${\tt R_{cs}(central, south)}$. The conjunctive query (CQ) $Q_{\rm all trips}(n, c, s):- R_{nc}(n, c), R_{cs}(c, s)$ shown in Datalog format finds all the north-central-south routes served by the airline. Now, consider a competitor that want to start new routes in a minimum number of these segments so as to affect at least $k$ of the routes being operated by the first airline. This can be exactly modeled by the $\ourprob$ problem. More generally, the $\ourprob$ problem can be used to analyze whether there is a small subset of input tuples with high impact on the output: if removal of a small number of input tuples causes a large change in the output, then there is significant dependence on this small subset which can be a potential source of vulnerability. 
\par 
The other motivation for the \ourprob\ problem comes from the recent study of \emph{explaining aggregate query answers and outliers by intervention}  \cite{WM13, RoyS14, RoyOS15}. Here, given an aggregate query $Q$, possibly with group-by operations, the user studies the outputs and may ask questions like \emph{`why a value $q_1$ is high}', or, \emph{`why a value $q_1$ is higher or lower than another value $q_2$'}. A possible explanation is a set of input tuples, compactly expressed using predicates, such that by removing these subsets we can change the selected values in the opposite direction, \eg, if the user thinks $q_1(D)$ is high, then a good explanation with high score capturing a subset $S$ of input tuples will make $q_1(D\setminus S)$ as low as possible. One example given in \cite{RoyS14} was on the DBLP publication data, where it is observed that there was a peak in SIGMOD papers coauthored by researchers in industry around year 2000 (and then it gradually declined), which is explained by some top industrial research labs that had hiring slow-down or shut down later (\ie, if the papers by these labs did not exist in the database, the peak will be lower). 
Although this line of work studies more general SQL aggregate queries with group-by and aggregates, it aims to change the output by deleting the input tuples (if the question is on a single output, it can only reduce for monotone queries). In this work, we study the complexity of the reverse direction of this problem in a simpler setting, where we only consider counts and conjunctive queries, and aim to find the minimum number of input tuples that would reduce the output by a desired amount.
\par
The \ourprob\ problem is also related to the \emph{partial vertex cover} \cite{CaskurluMPS17} or \emph{partial set cover problems} \cite{GKS04}, that are generalizations of the classical vertex cover or set cover problems, where instead of covering all edges or all elements, the goal is to select a minimum cost set of vertices or sets so that at least $k$ edges or $k$ elements are covered. These partial coverage problems are useful when the goal is to cover a certain fraction of the elements or edges, \eg, to build facilities to provide service to a certain fraction of the population \cite{GKS04}. The \ourprob\ problem is a special case of the partial set cover problem where each element is an output tuple of a CQ and each input tuple represents a set containing all the output tuples that would be removed on deleting it from the input (see Section~\ref{sec:approx}). 

\smallskip
\noindent
\textbf{Our contributions.~} We propose the \ourprob\ problem, and analyze its complexity for the class of \emph{full conjunctive queries without self-join}\footnote{The class of full CQ without self-join is a natural sub-class of CQs. Full CQs have been studied in contexts like the worst-case optimal join algorithms \cite{NgoPRR12, Veldhuizen14}, AGM bounds \cite{AGM2008}, and parallel evaluation of CQs \cite{KoutrisS11} (without self-joins). Self-join free queries have been studied in most of the related papers on deletion propagation. The complexity of \ourprob\ for larger classes of queries is interesting future work (see Section~\ref{sec:conclusions}).}. 
Given a conjunctive query (CQ) $Q$ that outputs the natural join of the input relation based on common attributes, a database instance $D$, and an integer $k$, the goal is to remove at least $k$ tuples from the output by removing the smallest number of input tuples from the database. \ourprob\ for arbitrary monotone queries $Q$ generalizes the source side-effect version of the deletion propagation problem for single or multiple output tuples, since we can add a selection operation to keep only these tuples in the output, and then run the \ourprob\ problem for $k = all$, to remove all tuples in the output. 
\par
First we give a \emph{dichotomy result} that completely resolves the complexity of the \ourprob\ problem for the class of full CQ without self-joins (Section~\ref{sec:dichotomy}). We assume the standard data complexity for our complexity results where the complexity is given in terms of the size of the input instance and the query and schema are assumed to be of constant size \cite{vardi1982complexity}.  We give an algorithm that only takes the query $Q$ as input, and decides in time that is polynomial in the size of the query (\ie, in time that is constant in data complexity), whether \ourprob\ can be solved in time that is polynomial in the data complexity for all instances $D$ and all values of $k$. If this algorithm returns true, then the problem is solvable in polynomial time for all $k$ and $D$. Moreover, if the algorithm returns false, the problem is NP-hard for some set of instances and some value of $k$. The problem we use to prove NP-hardness is \emph{partial vertex cover in bipartite graphs} (PVCB) that intends to cover at least $k$ of the edges by minimizing the cost instead of covering all the edges in a bipartite graph. Unlike the vertex cover problem in bipartite graphs, this problem was shown to be NP-hard by Caskurlu et al. \cite{CaskurluMPS17}. An example query where the reduction can be readily applied is the query for paths of length two: $Q_{2-path}(A, B) :- R_1(A), R_2(A, B), R_3(B)$ (see Lemma~\ref{lem:chain-3-nphard}). Note that this \emph{path query} was shown to be poly-time solvable for the deletion propagation problem \cite{Buneman+2002}, not only for the full CQ $Q_{2-path}$ that belongs to the class $SJ$ and therefore is poly-time solvable for a designated tuple, but also if projections are involved by a reduction to the minimum $s-t$-cut problem (the class PJ is, in general, hard for deletion propagation).  However, for arbitrary $k$, this problem becomes NP-hard for \ourprob.
\par
The query $Q$ can have more complex patterns like (attributes in the head are not displayed) 
{\footnotesize
\begin{eqnarray*}
Q_1(\cdots) & {:-} &  R_1(A), R_2(B), R_3(A, C), R_4(E, B), R_5(C, E), R_6(C, F)\\
Q_2(\cdots) & :- & R_1(A, P1, P2, E, F), R_2(B, P1, P2, E, F), R_3(P1, C1, C2),\\
& & R_4(P2, C1, C3, F), R_5(E, F, C1)
\end{eqnarray*}
}
or, a complex combination of the above two possibly involving additional attributes (we discuss these examples in Section~\ref{sec:dichotomy}). We give a set of simplification steps such that if none of them can be applied to $Q$, there is a reduction from the PVCB problem even if there is no obvious path structure like $Q_{2-path}$. In addition, we argue that the hardness is preserved in all simplification steps. 
If the algorithm to check whether a query is poly-time solvable returns true, then we give an algorithm that returns an optimal solution in polynomial time using the same simplification steps. There can be scenarios when $Q$ can be decomposed into two or more connected components, or when there is a common attribute in all relations in $Q$, and the algorithm gives a solution for each such case by building upon smaller sub-problems. 
\par
Since the \ourprob\ problem is NP-hard even for simple queries like $Q_{2-path}$, we then study approximations to this problem (Section~\ref{sec:approx}). We give an approximation algorithm by a reduction to the \emph{partial set cover} problem. When $f$ is the maximum frequency of an element in the sets, Gandhi, Khuller, and Srinivasan \cite{GKS04} generalize the classic primal dual algorithm for the set cover problem to obtain an $f$-approximation for the partial set cover problem. Using this algorithm, we get a $p$-approximation for the $\ourprob$ problem, where $p$ is the number of relations in the schema.

\smallskip
\noindent
\textbf{Related Work.~}
The classical view update problem has been studied extensively over the last four decades (\eg, \cite{Bancilhon+1981, Dayal+1982}), although the special case of deletion propagation has gained more popularity in the last two decades starting with the seminal work by Buneman, Khanna, and Tan \cite{Buneman+2002}. They showed that the class of monotone queries involving select-project-join-union (SPJU) operators can be divided into sub-classes for which finding the optimal source side-effect is NP-hard (\eg, queries with PJ or JU) or solvable in polynomial time (\eg, SPU or SJ). Recently, Friere et al. \cite{FreireGIM15} studied the \emph{resilience} problem, 
for the class of CQs without self-joins and with arbitrary functional dependency, and gave a dichotomy characterizing whether it is poly-time solvable or NP-hard. The input to the resilience problem is a Boolean CQ and a database instance $D$ such that $Q(D)$ is true, and the goal is to remove a minimum subset of tuples from the input that makes the query $Q$ evaluate to false. This is identical to the deletion propagation problem where all attributes are projected out. 
\par
In recent years,
the complexity of deletion propagation for the view side-effect version has been extensively studied by Kimelfeld, Vondrak, and Williams in a series of papers \cite{KimelfeldVW11, Kimelfeld12, KimelfeldVW13}. First, a dichotomy result was shown for CQs without self-joins \cite{KimelfeldVW11}, that if a `head-domination property' holds, then the problem is poly-time solvable; otherwise, it is APX-hard. In addition, it was shown that self-joins affect the hardness further. Then a dichotomy result was shown by Kimelfeld \cite{Kimelfeld12} for the deletion propagation problem with functional dependency for CQs without self-join. The multi-tuple deletion propagation problem was studied in \cite{KimelfeldVW13} where the goal is to remove a given set of output tuples, and a trichotomy result was shown (a query is poly-time solvable, APX-hard but constant approximation exists, or no non-trivial approximation exists). All these papers focus on the view side-effect version of deletion propagation and therefore the optimization goal is different from ours.
\par
For the source side-effect version, the complexity of multi-tuple deletion propagation was studied by Cong, Fan, and Geerts \cite{Cong+2006}. They show that for single tuple deletion propagation, \emph{key preservation} makes the problem tractable for SPJ views; however, if multiple tuples are to be deleted, the problem becomes intractable for SJ, PJ, and SPJ views. 
In our work, we study deletion propagation where the count of tuples to be removed is specified, and give a complete characterization for the class of full CQs without self joins.
\par
Beyond the context of deletion propagation, several dichotomy results have been obtained for problems motivated by data management, \eg, in the context of probabilistic databases \cite{DalviS12}, computing responsibility \cite{MeliouGMS11}, or database repair \cite{LivshitsKR18}. Problems similar to \ourprob\ have also been studied as \emph{reverse data management} \cite{MeliouGS11} where some action needs to be performed on the input data to achieve desired changes in the output. Toward this goal, Meliou and Suciu \cite{MeliouS12} studied \emph{how-to} queries, where a suite of desired changes (\eg, modifying aggregate values, creating or removing tuples) can be specified by a Datalog-like language, and a possible world satisfying all constraints and optimizing on some criteria is returned. Although \cite{MeliouS12} considered a much more general class of queries and update operations, their focus was to develop an end-to-end system using provenance and mixed integer programming, and not on the complexity of this problem. As discussed before, \ourprob\ is also related to explanations by intervention \cite{WM13, RoyS14, RoyOS15} where the goal is to find a set of input tuples captured by a predicate that changes an aggregate answer (or a function of multiple aggregate answers). For the class of simple predicates, this problem is poly-time solvable in data complexity, but complexity of the problem for more complex scenarios remains an open question.

\smallskip
\noindent
\textbf{Roadmap.~} We define some preliminary concepts in Section~\ref{sec:prelim}, then give our main dichotomy result in Section~\ref{sec:dichotomy} and the approximation results in Section~\ref{sec:approx}, and conclude with directions of future work in Section~\ref{sec:conclusions}.

\section{Preliminaries}
\label{sec:prelim}

\textbf{Schema, instance, relations, attributes, tuples.~} We consider the standard setting of multi-relational databases and conjunctive queries. Let $\allrel$ be a database schema that contains $p$ tables $R_1, \cdots, R_p$.
Let $\allattr$ be the set of all attributes in the database $\allrel$. Each relation $R_i$ is defined on a subset of attributes $\attr(R_i) = \allattr_i \subseteq \allattr$. We use $A, B, C, A_1, A_2 , \cdots \in \allattr$
to denote the attributes in $\allattr$ and $a, b, c, \cdots$ etc. to denote their values. For each attribute $A \in \allattr$, $\dom(A)$ denotes the domain of $A$ and $\rel(A)$ denotes the set of relations 
that $A$ belongs to, \ie, $\rel(A) = \{R_i: A \in \allattr_i\}$. 
\par
Given the database schema $\allrel$, let $D = D^\allrel$ be a given instance of $\allrel$, and the corresponding instances of $R_1, \cdots, R_p$ be $D^{R_1}, \cdots, D^{R_p}$. Where it is clear from the context, we will use 
$D$ instead of $D^\allrel$, and $R_1, \cdots, R_p$ instead of $D^{R_1}, \cdots, D^{R_p}$. Any tuple $t \in R_i$ is defined on $\allattr_i$. For any attribute $A \in \allattr_i$, $t.A \in \dom(A)$ denotes the value of $A$ in $t$.
	Similarly, for a set of attributes $\attrset \subseteq \allattr_i$, $t.\attrset$ denotes the values of attributes in $\attrset$ for $t$ with an implicit ordering of the attributes. 
Let $n_i$ be the number of tuples in $R_i$ and $n = \sum_{i = 1}^p n_i$ be the total number of tuples in $D$. 
\begin{figure}
\begin{tabular}{|c|c|}
\multicolumn{2}{c}{$R_1$}\\
\hline
A & B \\\hline\hline
a1 & b1 \\
a2 & b1 \\
a2 & b2 \\
a3 & b3 \\\hline
\end{tabular}~~~~~~~~
\begin{tabular}{|c|c|}
\multicolumn{2}{c}{$R_2$}\\
\hline
B & C \\\hline\hline
b1 & c1 \\
b2 & c2 \\
b3 & c2 \\\hline
\end{tabular}~~~~
\begin{tabular}{|c|c|}
\multicolumn{2}{c}{$R_3$}\\
\hline
C & E \\\hline\hline
c1 & e1 \\
c1 & e2 \\
c2 & e3 \\
c3 & c3 \\\hline
\end{tabular}~~~~
\begin{tabular}{|c|c|c|c|}
\multicolumn{4}{c}{$Q(D)$}\\
\hline
A & B & C & E\\\hline\hline
a1 & b1 & c1  & e1\\
a1 & b1 & c1  & e2\\
a2 & b1 & c1  & e1\\
a2 & b1 & c1  & e2\\
a2 & b2 & c2  & e3\\
a3 & b3 & c2  & e3\\\hline
\end{tabular}
\caption{Database schema and instance from Example~\ref{eg:setup} and the answers for query $Q(A, B, C, E) :- R_1(A, B), R_2(B, C), R_3(C, E)$.}
\label{fig:example_setup}
\end{figure}

\textbf{Full conjunctive queries without self-joins.~} We consider the class of \emph{full} conjunctive queries (CQ) \emph{without self-joins}. 
Such a CQ represents the natural join among the given relations, and has the following form: 
$$Q(\allattr) :- R_1(\allattr_1), R_2(\allattr_2), \cdots, R_p(\allattr_p)$$
We will call the above query $Q$ the \emph{full CQ on schema $\allrel$}.
Note that we do not have any projection in the body or in the head of the query, and each $R_i$ in $Q$ is distinct, \ie, the CQ does not have a self-join. 
\par
When this query is evaluated on an instance $D$, the result $Q(D)$
contains all tuples $t$ defined on $\allattr$
such that there are tuples $t_i \in R_i$ with $t_i.A = t.A$ for all attributes $A \in \allattr_i$, for all $i = 1, \cdots, p$.
Extending the notations, we use $\rel(Q)$ to denote all the relations that appear in the body of $Q$ (initially, $\rel(Q) = \allrel$), and $\attr(Q)$ to denote all the attributes 
that appear in the body of $Q$ (initially, $\attr(Q) = \allattr$).

\begin{example}\label{eg:setup}
In Figure~\ref{fig:example_setup}, we show an example database schema $\allrel$ with three relations $R_1, R_2, R_3$, where $\allattr = \{A, B, C, E\}$, $\allattr_1 = \{A, B\}$, $\allattr_2 = \{B, C\}$, and $\allattr_3 = \{C, E\}$. 
Further, $\rel(A) = \{R_1\}$, $\rel(B) = \{R_1, R_2\}$, $\rel(C) = \{R_2, R_3\}$, and $\rel(E) = \{R_3\}$. It also shows an instance $D$ and the result $Q(D)$ of the CQ $Q(A, B, C, E) :- R_1(A, B), R_2(B, C), R_3(C, E)$. Here $n = 11$ and $p = 4$. 
\end{example}

\textbf{Generalized Deletion Propagation problem \ourprob).~} Below we define the generalized deletion propagation problem in terms of the count of output tuples of a CQ:

\begin{definition}\label{def:problem}
Given a database schema $\allrel$ with $p$ relations $R_1, \cdots, R_p$, a CQ $Q$ on $\allrel$, an instance $D$, and a positive integer $k \geq 1$, the \emph{generalized deletion propagation problem (\ourprob)}  aims to remove at least $k$ tuples from the output $Q(D)$ by removing the minimum number of input tuples from $D$.
\end{definition}
Given $Q$, $k$, and $D$, we denote the above problem by $\ourprob(Q, k, D)$ (note that the schema is implicit in $Q$).
\begin{example}\label{eg:GDP}
Suppose $k = 4$ for the input in Example~\ref{eg:setup}. Then, given the instance in Figure~\ref{fig:example_setup}, the solution of \ourprob\ will include a single tuple $R_2(b1, c1)$ since by removing this tuple we would remove the first four output tuples in $Q(D)$.  
\end{example}
For arbitrary CQs, \ourprob\ generalizes the deletion propagation problem (both single- and multi-tuple versions), since we can only select the intended tuple(s) for deletion by a selection operation at the end, and then run \ourprob\ for $k = all$.\\

\textbf{\ourprob\ for full CQ without self-joins.~} In this paper we study the complexity of \ourprob\ for the class of full CQ without self-joins. 
Note that the problem is trivial if $k = 1$: since we do not allow projection, any output tuple can be removed by removing any one input tuple that has been used to produce the output tuple. 
This is observed in \cite{Buneman+2002}, who identified the class of SJ queries as poly-time solvable for single-tuple deletion propagation.\\ 

\textbf{Data complexity.~} In this paper we assume standard data complexity \cite{vardi1982complexity}, where the size of the input instance $n$ is considered variable, but the size of the query and schema is assumed to be constant, \ie, $p, |\allattr|$ are constants.


\section{Dichotomy}
\label{sec:dichotomy}

In this section, we give the following dichotomy result that characterizes the complexity of \ourprob\ on the full CQ of any input schema $\allrel$ with $p$ relations $R_1, \cdots, R_p$.
\begin{theorem}\label{thm:dichotomy}
If the algorithm $\isptime(Q)$ given in 
Algorithm~\ref{algo:dichotomy} returns \true, then for \emph{all} values of integer $k$ and for \emph{all} instances $D$, the problem $\ourprob(Q, k, D)$ is poly-time solvable in data complexity. 
Moreover, an optimal solution can be computed in poly-time. 
Otherwise the problem   $\ourprob(Q, k, D)$ is NP-hard.  
\end{theorem}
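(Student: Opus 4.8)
The plan is to prove the two halves of the dichotomy separately, both by analyzing the recursion that $\isptime(Q)$ (Algorithm~\ref{algo:dichotomy}) performs. That algorithm repeatedly tests a fixed list of structure-simplification steps --- $\caseempty$, $\casesingle$, $\casetwo$, $\casesubset$, $\casecommon$, $\casecooccur$, $\casedecompose$ --- each of which either solves $Q$ outright or rewrites it into one or more strictly smaller queries, and it returns $\true$ exactly when every branch of this recursion terminates at a step that is polynomial-time solvable. Since we work in data complexity the query, and hence the entire recursion tree, has constant size, so for the \emph{true} half it suffices to prove: (i) the explicit base cases ($\caseempty$: a trivial count; $\casesingle$: greedy; $\casetwo$: a biclique-coverage argument) are polynomial; and (ii) every rewriting step is a polynomial-time reduction from $\ourprob$ on $Q$ to $\ourprob$ on its children. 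For (ii) the crucial point is that a child call cannot be answered only for the specific $k$ at hand: each call must return the whole cost profile $c_Q(\cdot)$, where $c_Q(j)$ is the minimum number of input-tuple deletions that remove at least $j$ output tuples, for all $j \in \{0,1,\ldots,|Q(D)|\}$ --- still only polynomially many values because $|Q(D)| \le n^{p}$. Given the profiles of the pieces, the two ``decomposition'' steps recombine them by a knapsack-style dynamic program over the kill-budget $j$: for $\casecommon$, an attribute present in every relation splits $Q(D)$ into independent copies of the residual query indexed by that attribute's value, and we distribute the $k$ kills among the values; for $\casedecompose$, the query hypergraph is disconnected, $Q(D)$ is the Cartesian product of the outputs of the connected components, and we choose how many outputs $x_i$ to destroy in component $i$ (destroying $x_i$ of $N_i$ there destroys $\prod_i N_i - \prod_i(N_i - x_i)$ outputs overall) and add up the per-component costs; the remaining steps $\casesubset$ and $\casecooccur$ merely remove a relation whose attribute set is subsumed in the configuration where this is complexity-neutral, or coalesce attributes that occur in exactly the same relations. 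An optimal solution (not just its value) is recovered by the usual traceback. This yields the \emph{true} half of the theorem.

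For the \emph{false} half, suppose $\isptime(Q) = \false$. Then the recursion reaches some query $Q^\star$ on which no simplification step fires and which is not a polynomial base case; in particular $Q^\star$ is connected, has at least three relations, has no attribute common to all relations, has no two attributes with identical relation-sets, and no relation's attribute set is subsumed by another's in the way the subset step requires. I would argue (a) every such $Q^\star$ is NP-hard, and (b) each simplification step preserves NP-hardness in the upward direction --- a child query being NP-hard forces the parent to be NP-hard --- so that hardness of $Q^\star$ propagates back up to the original $Q$. Step (b) is routine: read backwards, each step lets the parent simulate a hard child instance with only polynomial blow-up (a single fixed value for a common attribute; a one-tuple relation for a Cartesian factor; a padding relation for a subsumed relation; a renaming for coalesced attributes), so a hardness proof for the child transfers. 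Step (a) is the heart of the matter and I would do it by reduction from partial vertex cover in bipartite graphs (PVCB), NP-hard by Caskurlu et al.~\cite{CaskurluMPS17}. The prototype is the reduction of Lemma~\ref{lem:chain-3-nphard} for $Q_{2-path}(A,B) :- R_1(A), R_2(A,B), R_3(B)$: from a bipartite graph $G = (L \cup R, E)$ set $R_1 := L$, $R_3 := R$, $R_2 := E$, so $Q(D) = E$; deleting a tuple of $R_1$ or $R_3$ removes every output tuple incident to that vertex while deleting a tuple of $R_2$ removes exactly one, and a short exchange argument shows some optimum uses only ``vertex'' tuples, so $\ourprob(Q_{2-path}, k', D)$ equals the PVCB optimum for covering $k'$ edges. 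The real work for a general irreducible $Q^\star$ is to reproduce this behaviour without a literal length-two path: using the structural consequences of irreducibility I would single out two relations to play the roles of $R_1$ and $R_3$ and a connecting sub-pattern of relations to play the role of $R_2$, populate every remaining relation of $Q^\star$ with the universal relation over the relevant active domains so that it imposes no constraint, and then check that $Q(D)$ is still exactly the intended edge set and that deleting a filler-relation or edge-relation tuple is never strictly more useful than deleting a vertex tuple. Combining (a) and (b) gives: if $\isptime(Q) = \false$ then $\ourprob(Q,k,D)$ is NP-hard for some $k$ and $D$.

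The main obstacle is step (a) for arbitrary irreducible $Q^\star$: converting the purely syntactic statement ``none of the simplification steps applies'' into a concrete PVCB gadget embedded in $Q^\star$, and making the reduction faithful --- guaranteeing that the universal filler relations neither create spurious output tuples nor open a cheaper route to destroying $k$ of them, which is exactly where the simple exchange argument of the $Q_{2-path}$ case must be generalized (perhaps after replicating or rescaling parts of the instance). A secondary difficulty is that the simplification steps were designed for the tractable direction, so turning $\casecommon$ and $\casedecompose$ into genuine two-way reductions needs one to exhibit, for the parent, a family of sub-instances provably as hard as the child yet still constructible with polynomial blow-up.
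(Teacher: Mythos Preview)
Your overall architecture matches the paper's: prove tractability by showing each simplification step is a polynomial reduction (with the full cost profile computed bottom-up on the \rectree), and prove hardness by (a) showing every irreducible query admits a PVCB reduction and (b) showing hardness propagates back up through the recursive steps. The tractability half is essentially what the paper does.

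There is, however, a concrete error in your hardness-propagation step (b) for \casedecompose. You write that the parent can simulate a hard child by using ``a one-tuple relation for a Cartesian factor.'' The paper explicitly warns that this fails: if a Cartesian factor contains a single tuple, deleting that one tuple removes \emph{all} of $Q(D)$ at cost $1$, so the parent instance is trivially easy regardless of how hard the child is. The paper's fix (Lemma~\ref{lem:hard-decompose}) is to populate each extra component with $L = P+1$ tuples arranged so that each input tuple there kills exactly one output tuple of that component, where $P = |Q^1(D')|$; this makes deleting from the extra components strictly less profitable than deleting from the hard component, and an exchange argument then forces the optimum back into $Q^1$. Moreover, because this step inflates the instance by a factor polynomial in $|D'|$, repeated applications of \casedecompose\ along the \rectree\ could compound; the paper handles this with a separate global argument (Lemma~\ref{lem:hard-decompose-general}) that builds the root instance in one shot from the deepest hard node. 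Your closing paragraph flags this as a ``secondary difficulty,'' but your proposed construction for it is the one that does not work.

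For step (a), your plan of ``populate every remaining relation with the universal relation over the relevant active domains'' is too coarse and hits the same trap: some relation may end up containing a single tuple (or tuples that dominate everything), again letting one deletion wipe out $Q(D)$. The paper avoids this by splitting into two cases. If some pair $R_i,R_j$ share no attribute (Lemma~\ref{lem:nphard-disjoint}), it grows two components from $R_i$ and $R_j$ and labels every relation as a $U$-, $V$-, or $E$-relation, so every relation carries genuine vertex or edge information. If every pair shares an attribute (Lemma~\ref{lem:nphard-overlap}), it first proves that one can pick $R_1,R_2,R_3$ and attributes $A\in\allattr_{12}\setminus\allattr_{23}$, $B\in\allattr_{23}\setminus\allattr_{12}$ (property~(P2), obtained by choosing $R_1$ of minimum arity and $R_2$ of minimum intersection with $R_1$), assigns $*$ only to attributes in $\allattr_{13}$, and then uses the minimum-arity choice of $R_1$ to certify that no relation ends up all-$*$ (property~(P3)). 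That last point is exactly the exchange-argument obstruction you anticipated, and the paper's device for defeating it is this careful choice of $R_1$, not a universal-relation filler.
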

\begin{algorithm}[t]\caption{Deciding whether \ourprob\ for query $Q$ is poly-time solvable for all $k$}\label{algo:dichotomy}
{\footnotesize
	\begin{codebox}
		\Procname{\isptime(Q)}
		\li \If $\rel(Q) = \emptyset$ or $\attr(Q) = \emptyset$  \hfill \textbf{/* (\caseempty) */}
		\li \Do \Return \true
		\li \ElseIf $Q$ has one relation  \textbf{/* (\casesingle) */}
		\li \Do \Return \true
		\li \ElseIf $Q$ has two relations  \textbf{/* (\casetwo) */}
		\li \Do \Return \true
		\li \ElseIf $\exists R_i \in \rel(Q)$ such that $\forall~ R_j \neq R_i \in \rel(Q)$, \\
		$\attr(R_i) \subseteq \attr(R_j)$  \textbf{/* (\casesubset) */}
		\li \Do \Return \true
		\li \ElseIf $\exists A \in \attr(Q)$ such that \\ for all relations $R_i \in \rel(Q)$, $A \in \attr(R_i)$ \\  \textbf{/* (\casecommon) */}
		\li \Do Let $Q_{-A}$ be the query formed by removing $A$ \\from each relation in $\rel(Q)$
		\li \Return $\isptime(Q_{-A})$   
		\li \ElseIf  $\exists A, B \in \attr(Q)$ such that $\rel(A) = \rel(B)$ \\  \textbf{/* (\casecooccur) */}
		\li \Do Replace both $A, B$ by a new attribute $C \notin \attr(Q)$\\
		 in all relations where $A$ and $B$ appear
		\li Let the new query be $Q_{AB \rightarrow C}$
		\li \Return \isptime($Q_{AB \rightarrow C}$)   
		\li \ElseIf $Q$ can be decomposed into \emph{maximal connected components}\\ (see text) $Q^1, \cdots, Q^s$ where $s \geq 2$ \textbf{/* (\casedecompose) */}
		\li \Do \Return $\wedge_{i = 1}^s \isptime(Q^i)$
		\li \Else \Return \false	
	\end{codebox}
	}
\end{algorithm}
The algorithm has seven simplification steps as written next to each condition, and some simplification steps call the algorithm $\isptime$ recursively\footnote{It may be noted that Algorithm~\ref{algo:dichotomy} has a correspondence with \emph{hierarchical queries}, where for any two attributes $A, B$, either one of $\rel(A), \rel(B)$ is a subset of the other, or they are disjoint. Hierarchical queries have been used in the  seminal dichotomy results of efficient query evaluations in probabilistic databases by Dalvi and Suciu (\eg, \cite{DalviS12}) that classify queries either as poly-time solvable or \#P-hard. Our problem is an optimization problem instead of a counting-like problem of query evaluation in probabilistic databases, and the proof techniques for both hardness and algorithmic results are different.}. The first four steps check if the query is empty, has one or two relations, or there is a relation whose all attributes appear in all other relations; then the algorithm returns true. The fifth step looks for a common attribute present in all relations, and the sixth step checks whether two attributes co-occur in all relations. The last simplification step 
decomposes the query $Q$ into two or more \emph{maximal connected components} (if possible), which can be achieved by a standard procedure. We form a graph $G_Q$ on $\rel(Q)$ as the vertices. For any two relations $R_i, R_j \in \rel(Q)$, if there is an attribute $A \in \attr(R_i) \cap \attr(R_j)$,  then we add an edge between $R_i$ and $R_j$ in $G_Q$. Then we decompose $G_Q$ into maximal connected components using standard graph-traversal-based algorithms \cite{Cormen:2009} and call the components as $Q^1, \cdots, Q^s$. For instance, if $Q(A, B, C, E, F, G) :- R_1(A, B), R_2(F), R_3(B, C), R_4(G), R_5(C, E)$, then $Q$ can be decomposed into $s = 3$ maximal connected components $Q^1(A, B, C) :- R_1(A, B), R_3(B, C), R_5(C, E)$, $Q^2(F) :- R_2(F)$, and $Q^3(G) :- R_4(G)$.
\par
Before we prove Theorem~\ref{thm:dichotomy}, we give some examples illustrating the application of the theorem (we omit the attributes in the head of the queries).
\begin{example}\label{eg:queries}
\begin{itemize}
    \item Consider $Q_0(\cdots):-$ $R_1(A, B)$, $R_2(F, G)$, $R_3(B, C, E)$, $R_4(C, E)$, $R_5(G, H)$ (also see Figure~\ref{fig:example_T}). Observe that the first four simplifications cannot be applied to $Q_0$. Since $rels(C)=rels(E)$, \casecooccur\ gives $Q_{0 \hspace{1mm} C,E\rightarrow K}$. Next, \casedecompose\ is applied which gives $Q^1$ (with $R_1, R_3, R_4$) and $Q^2$ (with $R_2, R_5$). The query $Q^2$ has two relations so it returns \true. However, All simplifications fail for $Q^1$, so it returns {\false}. In turn, \isptime\ for $Q_{0 \hspace{1mm} C,E\rightarrow K}$ and $Q_0$ return \false. Therefore $Q_0$ is a hard query. 
    \item Consider the queries $Q_1$ and $Q_2$ given in the introduction. None of the simplification steps can be applied to both these queries, so these two queries are NP-hard (albeit the NP-hardness for these two queries are shown by two different proof techniques as shown Lemma~\ref{lem:nphard-disjoint} and Lemma~\ref{lem:nphard-overlap}). 
    \item Consider $Q_3(\cdots)$ $:-$ $R_1(A, P)$, $R_2(A, P, M)$, $R_3(A, P, G)$, $R_4(A, B, C)$, $R_5(A, C, E)$, $R_6(F)$. Here first we apply \casedecompose\ to separate $R_6$, which is a single relation, and therefore returns \true. For $R_1, \cdots, R_5$, first \casecommon\ is applied to remove $A$. Then \casedecompose\ is again applied to partition into $R_1, R_2, R_3$ and $R_4, R_5$. The intermediate query for $R_4, R_5$ returns \true\ as there are two relations. For $R_1, R_2, R_3$, \casesubset\ is applied and returns true as $P$ is the only remaining attribute in $R_1$ and it belongs both $R_2$ and $R_3$. In turn, all the intermediate queries and the original query return \true. By Theorem~\ref{thm:dichotomy}, this query is poly-time solvable, and an optimal solution can be obtained by applying Algorithm~\ref{algo:computeopt} in Section~\ref{sec:algo}. 
\end{itemize}
\end{example}

In our hardness proofs and in our algorithms, we use the \emph{\rectree} of query $Q_0$ capturing the repeated use of $\isptime(Q)$ for intermediate queries $Q$ within $\isptime(Q_0)$, which is defined as follows.

\begin{definition}\label{def:rectree}
Let $Q_0$ be the query that is given as the initial input to Algorithm~\ref{algo:dichotomy}. The \rectree\ of $Q_0$ is a tree $T$ where the non-leaf nodes denote intermediate CQs $Q$ for which $\isptime(Q)$ has been invoked during the execution of $\isptime(Q_0)$. The leaves in $T$ are \true\ or \false. The root is $Q_0$. 
\par
If  one of the first four steps is applied to an intermediate query $Q$, we add a single leaf child to $Q$ and assign \true. If no simplifications can be applied, we add a leaf node with value $\false$ as the child of query $Q$. If \casecooccur\ is applied in Algorithm~\ref{algo:dichotomy}, we add  a single child $Q_{A, B \rightarrow C}$ of $Q$ (see Algorithm~\ref{algo:dichotomy}). If \casecommon\ is applied, we add a single child $Q_{-A}$ of $Q$. If \casedecompose\ is applied, we add $s$ children $Q^1, \cdots, Q^s$ where $s$ is the number of connected components. 
\end{definition}
{\footnotesize
\begin{figure}[!ht]
    \centering
    \begin{tikzpicture}[sibling distance=12em, level distance=5em
     ]]
        \node[text width=65mm] {$Q_0(A, B, C, E, F, G, H):-$ $R_1(A, B)$, $R_2(F, G)$, $R_3(B, C, E)$, $R_4(C, E)$, $R_5(G, H)$}
        child{ 
            node[text width=65mm] {$Q_{0 \hspace{1mm} C, E\rightarrow K}(A, B, F, G, H, K):-$ $R_1(A, B)$, $R_2(F, G)$, $R_3(B, K)$, $R_4(K)$, $R_5(G, H)$} 
            child{ 
                node[text width=45mm] {$Q^1(A, B, K):-$ $R_1(A, B)$, $R_3(B, K)$, $R_4(K)$}
                child{
                    node{{\em false}}
                }edge from parent node[left] {7}
            }
            child {
                node[text width=30mm] {$Q^2(F, G, H):-$ $R_2(F, G)$, $R_5(G, H)$}
                child{
                    node{{\em true}}edge from parent node[right] {2}
                }edge from parent node[right] {7}
            }edge from parent node[left] {6}
        }; 
    \end{tikzpicture}

    \caption{Tree $T$ when Algorithm \ref{algo:dichotomy} is run on $Q_0$ from Example~\ref{eg:queries}, which simplification step has been applied is shown next to each edge.} 
    \label{fig:example_T}
\end{figure}
}

\cut{
\par

Now consider $Q(A, B, C, D):-R_1(A, B), R_2(C, D)$. The full CQ $Q$ will result in a cross product between the two relations. This schema is another example of an easy case for $\ourprob$. Here, we can compute in poly-time the number of output tuples resulting from an input tuple in either relation, and greedily pick the ones that intervene on the most number of output tuples. This gives us a smallest set of input tuples to delete to eliminate at least $k$ tuples from $Q(D)$.

In the case that the two relations, $R_1$ and $R_2$, share attributes, we can first partition the relations by the combination of values of the common attributes, and then build an overall optimal solution by combining choices from the different parts. We give details in Algorithm~\ref{algo:tworel}. The idea of partitioning relations is also important when an attribute appears in all relations (see Algorithm~\ref{algo:commonattr}). Observe that $Q(A, B, C, E):-R_1(A, B), R_2(A, C), R_3(A, E)$, also known as {\em star join}, is a special case.

When two attributes, say $A$ and $B$, appear in the same set of relations in CQ $Q$, we can combine $A$ and $B$ by replacing them with a new attribute, say $C$, so that the new tuple $t'$ is of the form $t'.C=(t.A, t.B)$, for every tuple $t$ in $rels(A)\cap rels(B)$ (see Algorithm~\ref{algo:cooccur}).

However, $Q(A, B):-R_1(A), R_2(A, B), R_3(B)$ is a hard case for $\ourprob$, which we prove in Lemma~\ref{lem:chain-3-nphard}. This is the simplest setting of a hard case. We show how this ties into more complicated schemas in Lemmas~\ref{lem:nphard-disjoint}-\ref{lem:hardness}.
}

We prove Theorem \ref{thm:dichotomy} in the next two subsections. First, in Section~\ref{sec:hardness} we show that if Algorithm~\ref{algo:dichotomy} returns false for a query $Q$, then $\ourprob(Q, k, D)$ is NP-hard for $Q$. Then in Section~\ref{sec:algo} we give algorithms to solve the $\ourprob(Q, k, D)$ problem in polynomial time for all $k$ and $D$ when $\isptime(Q)$ returns true. 

\subsection{Hardness}\label{sec:hardness}
The proof of hardness is divided into two parts. In the first part we argue that for the last three simplification steps that call the $\isptime$ algorithm recursively, hardness of the query is preserved ({\bf proved in Appendix~\ref{sec:hard-propagate}} due to space constraints). In particular, the last step \casedecompose\ needs a careful construction here to ensure a poly-time reduction, which we prove in two steps. First we show that an immediate application of \casedecompose\ gives a poly-time reduction showing NP-hardness, but it leads to a polynomial increase in the instance size. Then we show how the NP-hardness can be obtained up to the original query $Q_0$ still incurring a polynomial increase in the problem size. In the first four simplification steps the algorithm returns true, so we do not need to consider them.
\par
In the second part we argue that if all the simplification steps fail, then the problem is NP-hard, which we discuss below. 

\cut{
\subsubsection{Hardness propagation for simplification steps}\label{sec:hard-propagate}
While the proofs for the fifth and sixth steps are more intuitive, the proof for the seventh step needs a careful construction. 

\textbf{Hardness propagation \casecommon .~} The following lemma shows the hardness propagation for the fifth simplification step  in Algorithm~\ref{algo:dichotomy}.
\begin{lemma} \label{lem:hard-common}
Let $\exists A \in \attr(Q)$ such that for all relations $R_i \in \rel(Q)$, $A \in \attr(R_i)$, and 
let $Q' = Q_{-A}$ be the query formed by removing $A$ from each relation in $\rel(Q)$. If $\ourprob(Q', k, D')$ is NP-hard, 
then $\ourprob(Q, k, D)$ is NP-hard. 
\end{lemma}
\begin{proof}
Given an instance of $\ourprob(Q', k, D')$, we construct an instance of $\ourprob(Q, k, D)$ as follows. 
\par
First, we claim that no relation $R_i$ in $Q'$ can have empty set of attributes. Otherwise, $A$ was the single attribute in $R_i$, which also appeared in all other relations in $Q$, \ie, $\forall R_j,~ \attr(R_i) \subseteq \attr(R_j)$. Therefore, the condition for \casesubset\ is satisfied, which is checked before \casecommon, and would have returned \true. Therefore, all relations in $Q'$ has at least one attribute.
\par
For any relation $R_i' \in \rel(Q')$, if a tuple $t'$ appears in $D'^{R_i'}$, create a new tuple $t$ in $D^{R_i}$ such that $t.A = *$ (a fixed value for all tuples and all relations in attribute $A$), and for all other attributes $E$, $t.E = t'.E$ (there is at least one such $E$). 
Hence there is a one-to-one correspondence between the tuples in the output $Q(D)$ and $Q'(D')$, and also in the input $D$ and $D'$. Therefore, a solution to $\ourprob(Q, k, D)$ of size $C$ corresponds to a solution to $\ourprob(Q', k, D')$, and vice versa. \end{proof}


\textbf{Hardness propagation for \casecooccur.~} Next we show the hardness propagation for the sixth simplification step in Algorithm~\ref{algo:dichotomy}.
\begin{lemma}\label{lem:hard-cooccur}
Let $A, B \in \attr(Q)$ such that $\rel(A) = \rel(B)$. Let $Q' = Q_{AB \rightarrow C}$ be the query by replacing $A, B$ with a new attribute $C \notin \attr(Q)$ in all relations. If $\ourprob(Q', k, D')$ is NP-hard, 
then $\ourprob(Q, k, D)$ is NP-hard.
\end{lemma}
\begin{proof}
Given an instance of $\ourprob(Q', k, D')$, we construct an instance of $\ourprob(Q, k, D)$ as follows. Consider any relation $R_i' \in \rel(Q')$ such that $C \in \attr(R_i')$, the corresponding relation $R_i$ in $Q$ has both attributes $A, B$ instead of $C$. If a tuple $t'$ appears in $D'^{R_i'}$, create a new tuple $t$ in $D^{R_i}$ such that $t.A = t.B = t'.C$, and for all other attributes $E$, $t.E = t'.E$ (\ie, both $A, B$ attributes get the value of attribute $C$ in $D$). Hence there is a one-to-one correspondence between the tuples in the output $Q(D)$ and $Q'(D')$, and also in the input $D$ and $D'$. Therefore, a solution to $\ourprob(Q, k, D)$ of size $C$ corresponds to a solution to $\ourprob(Q', k, D')$, and vice versa. 
\end{proof}

\textbf{Hardness propagation for \casedecompose.~} Now we show the hardness propagation for the seventh simplification step. Unlike the above two steps, this step requires a careful construction. For instance, consider a query  $Q(A, B, E):- R_1(A), R_2(A, B), R_3(B), R_4(E)$, which can be decomposed into two connected components $Q^1(A, B) :- R_1(A), R_2(A, B), R_3(B)$ and $Q^2(E) :- R_4(E)$. As we show later in Lemma~\ref{lem:chain-3-nphard}, $\ourprob(Q^1, k', D')$ is NP-hard for some $k', D'$, so although $Q^2$ is easy (see Section~\ref{sec:algo}), $\ourprob(Q, k, D)$ should be NP-hard for some $k, D$. An obvious approach is to assign a dummy value for $E$ in $R_4(E)$ similar to Lemma~\ref{lem:hard-common} above. However, if the number of tuples in $R_4$ is one or small in $D$, $\ourprob(Q, k, D)$ gains advantage by removing tuples from $R_4$, thereby completely bypassing $Q^1$. Therefore, a possible solution is to use a large number of tuples in $R_4$ that do not give a high benefit to delete from $R_4$, \eg, if it has more than the number of tuples in the output of $Q^1$ on $D$ restricted to $R_1, R_2, R_3$. First, we show the hardness propagation for a single application of \casedecompose. 
 
\begin{lemma} \label{lem:hard-decompose}
Let $Q$ is decomposed into maximal connected components  $Q^1, \cdots, Q^s$ where $s \geq 2$. without loss of generality (wlog.), suppose $\ourprob(Q^1, k', D')$ is NP-hard. 
Then, $\ourprob(Q, k, D)$ is NP-hard.
\end{lemma}
\begin{proof}
Given $Q^1, k', D'$, we create $k, D$ as follows. In $D$, all relations in $Q^1$ retain the same tuples. For all relations $R_i \in \bigcup_{j = 2}^s \rel(Q^j)$, we create $L$ tuples as follows: let us fix $R_i$, and let $A_1, \cdots, A_u$ be the attributes in $R_i$. $R_i$ in $D$ contains $L$ tuples of the form $t_{\ell} = (a_{1, \ell}, a_{2, \ell}, \cdots, a_{u, \ell})$, for $\ell = 1$ to $L$. Similarly, we populate the other relations. Note that in all the relations $R_i$ that an attribute $A_h$ appears in, it has $L$ values $a_{h, 1}, \cdots, a_{h_L}$. Therefore, any connected component $Q^2, \cdots, Q^s$ except $Q^1$ has $L$ output tuples (the components are maximally connected) and each input tuple of a relation participates in exactly one output tuple \emph{within} the connected component. Since $Q^1, \cdots, Q^s$ are disjoint in terms of attributes, in the output of $Q$, the outputs of each connected component will join in cross products. Suppose $Q^1(D')$ has $P$ output tuples. Then the number of output tuples in $Q(D)$ is $P \cdot L^{s-1}$. We set $k = k' \cdot L^{s-1}$ and $L = P+1$. The size of $D$ is $|D'| + L(s-1)$. Since $P \leq |D'|^{p'}$ (where $p'$ is the number of relations in $Q^1$), the increase in size of the inputs in this reduction is still polynomial in data complexity. 
Now we argue that $\ourprob(Q^1, k', D')$ has a solution of size $C$ if and only if $\ourprob(Q, k, D)$ has a solution of size $C$.
\par
(only if) If by removing $C$ tuples from $\rel(Q^1)$ we remove $k'$ tuples from $Q^1(D')$, then by removing the same  $C$ tuples we will remove $k' \cdot L^{s-1}$ output tuples from $Q(D)$ by construction as the output tuples from the connected components join by cross product, and each connected component has $L$ output tuples. 
\par
(if) Consider a solution to $\ourprob(Q, k, D)$ that removes at least $k = k'. L^{s-1}$ tuples from $Q(D)$. Note that any tuple from any relation $R_i \in \rel(Q^j)$, $j \geq 2$, can remove exactly 1 output tuple from the output of connected component $Q^j$ that it belongs to. Therefore, it removes exactly $o_2 = P. L^{s-2}$ tuples from the output. On the other hand, since we do not have any projection, any tuple from any relation $R_i \in \rel(Q^1)$ removes at least one output tuple from $Q^1(D')$, therefore at least $o_1 = L^{s-1}$ output tuples from $Q(D)$. Since $L = P+1$, $o_2 < o_1$. Therefore, if the assumed solution to $\ourprob(Q, k, D)$ removes any input tuple from any relation belonging to $Q^2, \cdots, Q^s$, we can replace it by any input tuple from the relations in $Q^1$ that has not been removed yet without increasing the cost or decreasing the number of output tuples removed. Therefore, wlog. all removed tuples appear in relations in $Q^1$. Since $k = k'. L^{s-1}$ tuples are removed from $Q(D)$, each tuple in $Q^1(D')$ removes exactly $L^{s-1}$ tuples from $Q(D)$, and the set of tuples removed from $Q(D)$ by tuples from $Q^1(D')$ are disjoint, at least $k'$ tuples must be removed from $Q^1(D')$ which gives a solution of cost at most $C$. 
\end{proof}

Although the above proof requires an exponential blow-up in the size of the query and not the data, there may be multiple application of \casedecompose\ in Algorithm~\ref{algo:dichotomy} in combination with the other simplification steps. Therefore, we need to ensure that the size of the instance $D$ that we create from $D'$ is still polynomial in data complexity for the original query that we started with. 
\par
Using ideas from Lemmas~\ref{lem:hard-common}, \ref{lem:hard-cooccur}, and \ref{lem:hard-decompose}, below we argue if any application of \casedecompose\ yields a hard query in one of the components, then the 	query we started with (say $Q_0$) is hard. Such an argument was not needed for   \casecommon\ and \casecooccur\ since in Lemma~\ref{lem:hard-common} and \ref{lem:hard-cooccur} the reductions do not yield an increase in the size of database instance. 
\par

\begin{lemma}\label{lem:hard-decompose-general}
Let $Q_0$ be the query that is given as the initial input to Algorithm~\ref{algo:dichotomy}. For any intermediate query $Q'$ in the \rectree\ of $Q_0$, if $\ourprob(Q', k', D')$ is NP-hard, 
then $\ourprob(Q_0, k, D)$ is NP-hard. 
\end{lemma} 
\begin{proof}
Consider the \rectree\   $T$ in which the simplification steps have been applied from $Q_0$ to $Q'$. 
Now consider the node $Q'$ in $T$ such that $\ourprob(Q', k', D')$ is NP-hard. The instance $D'$ is defined on the relations and attributes in $Q'$. From $D'$, we need to construct an instance $D$ on the relations and attributes in $Q_0$.  
\par
Consider the path from $Q'$ to the root $Q_0$. The relations in $Q'$ can lose attributes from the corresponding relations in $Q_0$ only by steps \casecommon\ and  \casecooccur\ along this path. 
For the the attributes that were lost on the path from $Q_0$ to $Q'$, we populate the values bottom-up from $Q'$ to $Q_0$ as follows. The relations appearing in $Q'$ have the same number of tuples in $D$ and $D'$. Moreover, (i) if two variables $A, B$ are replaced by a variable $C$  by \casecooccur, both $A$ and $B$ get the same values of $C$ in the corresponding tuples, (ii) if a variable $A$ is removed by \casecommon, we replace it by a constant value $*$ in all tuples. Let $Q$ be the query formed by extending the relations in $Q'$ with attributes by this process at the root.  
\par
Note that the relations in any non-descendant and non-ancestor node $Q_{nad}$ will be disjoint from those in $Q'$, but they can share some attributes with $Q$ only  by \casecommon: \casecommon\ is applied before \casecooccur\ so multiple attributes co-occurring in all relations will be removed by \casecommon\ not by \casecooccur; further before \casecooccur\ can be applied, the decomposition step \casedecompose\ must be called at least once.  We take all the relations that do not appear in the ancestors and descendants of $Q'$, and do a maximal connected component decomposition on them excluding the attributes that are common with $Q$. Let $s$ be the number of connected components. 
The tables in the connected components each get $L$ tuples as in the construction of Lemma~\ref{lem:hard-decompose}: consider a relation $R_i  \notin \rel(Q)$. (a) if there is an attribute $A \in \attr(R_i) \cap \attr(Q)$, assign $A = *$ in all $L$ tuples, (b) for all other attributes say $(A_1, \cdots, A_u) \in \attr(R_i) \setminus \attr(Q)$,  $R_i$ in $D$ contains $L$ tuples of the form $t_{\ell} = (a_{1, \ell}, a_{2, \ell}, \cdots, a_{u, \ell})$, for $\ell = 1$ to $L$. Therefore, the number of output tuples in each connected component is $L$ and each input tuple from each connected component can remove exactly one output tuple from the component. 

\begin{figure}[!ht]
    \centering
    \begin{tabular}{|c|c|}
        \multicolumn{2}{c}{$R_1$}\\
        \hline A & B\\\hline\hline
        $a_1$ & $b_1$\\
        $a_1$ & $b_2$\\
        $a_2$ & $b_1$\\
        \hline
    \end{tabular}
    \qquad
    \begin{tabular}{|c|c|}
        \multicolumn{2}{c}{$R_3$}\\
        \hline B & K\\\hline\hline
        $b_1$ & $k_1$\\
        $b_1$ & $k_2$\\
        $b_2$ & $k_1$\\
        $b_2$ & $k_3$\\
        \hline
    \end{tabular}
    \qquad
    \begin{tabular}{|c|}
        \multicolumn{1}{c}{$R_4$}\\
        \hline K\\\hline\hline
        $k_1$\\
        $k_2$\\
        $k_3$\\
        $k_4$\\
        \hline
    \end{tabular}
    \\ $Q^1(A, B, K):-$ $R_1(A, B)$, $R_3(B, K)$, $R_4(K)$\\
    
    \begin{tabular}{|c|c|}
        \multicolumn{2}{c}{$R_1$}\\
        \hline A & B\\\hline\hline
        $a_1$ & $b_1$\\
        $a_1$ & $b_2$\\
        $a_2$ & $b_1$\\
        \hline
    \end{tabular}
    \begin{tabular}{|c|c|c|}
        \multicolumn{2}{c}{$R_3$}\\
        \hline B & C & E\\\hline\hline
        $b_1$ & $k_1$ & $k_1$\\
        $b_1$ & $k_2$ & $k_2$\\
        $b_2$ & $k_1$ & $k_1$\\
        $b_2$ & $k_3$ & $k_3$\\
        \hline
    \end{tabular}
    \begin{tabular}{|c|c|}
        \multicolumn{2}{c}{$R_4$}\\
        \hline C & E\\\hline\hline
        $k_1$ & $k_1$\\
        $k_2$ & $k_2$\\
        $k_3$ & $k_3$\\
        \hline
    \end{tabular}
    \begin{tabular}{|c|c|}
        \multicolumn{2}{c}{$R_2$}\\
        \hline F & G\\\hline\hline
        $f_1$ & $g_1$\\
        $f_2$ & $g_2$\\
        $f_3$ & $g_3$\\
        $f_4$ & $g_4$\\
        $f_5$ & $g_5$\\
        $f_6$ & $g_6$\\
        $f_7$ & $g_7$\\
        \hline
    \end{tabular}
    \begin{tabular}{|c|c|}
        \multicolumn{2}{c}{$R_5$}\\
        \hline G & H\\\hline\hline
        $g_1$ & $h_1$\\
        $g_2$ & $h_2$\\
        $g_3$ & $h_3$\\
        $g_4$ & $h_4$\\
        $g_5$ & $h_5$\\
        $g_6$ & $h_6$\\
        $g_7$ & $h_7$\\
        \hline
    \end{tabular}
    \caption{An example construction for Lemma~\ref{lem:hard-decompose-general}: Consider the full CQ $Q_0$ from Figure \ref{fig:example_T}. For the instance of $Q^1$ given in this figure, we create an instance $D$ for $Q_0$. 
    Since we replaced $C,E$ with $K$,
    both $C, E$ get the values from $K$. The output size of $Q^1$ is 6, therefore the relations in $Q^2$ that do not share any attributes with $Q^1$, so we add $6 + 1 = 7$ tuples to $R_2$ and $R_5$ of the form $(f_i, g_i)$ and $(g_i, h_i)$, for $1\leq i\leq 7$, respectively.}
    \label{fig:hard-decompose-general}
\end{figure}
An example reduction is shown in Figure~\ref{fig:hard-decompose-general}.
\par
Eventually, $Q_0$ is formed by joining $Q$ with the relations in the $s$ connected components. The attribute values $*$ or repeated values due to \casecommon\ and \casecooccur\ are not going to impact the number of output tuples.
\par
Now the same reduction as in Lemma~\ref{lem:hard-decompose} works: we set $k = k'. L^s$ where $L = P+1$, and $P = $ the number of tuples in $Q'(D')$. 
We again argue that $\ourprob(Q', k', D')$ has a solution of size $C$ if and only if $\ourprob(Q_0, k, D)$ has a solution of size $C$.
\par
(only if) If by removing $C$ tuples from $\rel(Q')$ we remove $k'$ tuples from $Q'(D')$, then by removing the corresponding $C$ tuples from $Q$, we will remove $k' \cdot L^s$ output tuples from $Q_0(D)$, by construction. The output tuples from the $s$ connected components join by cross product, and each connected component has $L$ output tuples. 
\par
(if) Consider a solution to $\ourprob(Q_0, k, D)$ that removes at least $k = k'. L^{s}$ tuples from $Q_0(D)$. Note that any tuple from any relation $R_i \notin \rel(Q)$, can remove exactly 1 output tuple from the output of connected component that it belongs to. Therefore, it removes exactly $o_2 = P. L^{s-1}$ tuples from the output. On the other hand, since we do not have any projection, any tuple from any relation $R_i \in \rel(Q)$ removes at least one output tuple from $Q'(D')$, therefore at least $o_1 = L^{s}$ output tuples from $Q_0(D)$. Since $L = P+1$, $o_2 < o_1$. Therefore, if the assumed solution to $\ourprob(Q_0, k, D)$ removes any input tuple from any relation belonging to the relations $\notin \rel(Q)$, we can replace it by any input tuple from the relations in $Q$ that has not been removed yet without increasing the cost or decreasing the number of output tuples removed. Therefore, wlog. all removed tuples appear in relations in $Q$. Since $k = k'. L^{s}$ tuples are removed from $Q_0(D)$, each tuple in the relations from $Q$ removes exactly $L^{s}$ tuples from $Q_0(D)$. Since the set of tuples removed from $Q_0(D)$ by tuples from $Q$ are disjoint, and the extension of attributes from relations in $Q'$ to those in $Q$ by repeating values or by using a constant $*$ does not have an effect on the number of output tuples, at least $k'$ tuples must be removed from $Q'(D')$, giving a solution of cost at most $C$. 
\end{proof}
}
The NP-hard problem that we use to prove the NP-hardness of \ourprob\ is  the \emph{partial vertex cover problem for bipartite graphs (PVCB)} defined as follows.
\begin{definition}\label{def:pvc}
The input to the PVCB problem is an undirected bipartite graph $G(U, V, E)$ where $E$ is the set of edges between two sets of vertices $U$
 and $V$, and an integer $k$. The goal is to find a subset $S \subseteq U \cup V$ of minimum size such that at least $k$ edges from $E$ have at least one endpoint in $S$. The PVCB problem has been shown to be NP-hard (\cite{CaskurluMPS17})\footnote{The value of $k$ in the hard instance of PVCB in \cite{CaskurluMPS17} is not a constant, hence the complexity of \ourprob\ for constant $k > 1$ remains open.}.
 \end{definition}

\cut{
Recall from our earlier discussion that {\em path query} has been shown to be poly-time solvable for the deletion propagation problem \cite{Buneman+2002}. The goal there is to remove a target output tuple, and Buneman et al. model it as a minimum $s-t$-cut problem. However, the goal in $\ourprob(Q, k, D)$ is to find the minimum subset of tuples in $Q$ to remove at least $k$ output tuples from $Q(D)$. We first prove that $Q_{2-path}(A, B)$ is NP-hard (Lemma~\ref{lem:chain-3-nphard}), and then show how it ties into more complex, hard schemas for $\ourprob(Q, k, D)$ (Lemmas~\ref{lem:nphard-disjoint}-\ref{lem:hardness}).

\begin{lemma}\label{lem:chain-3-nphard}
For the query $Q_{2-path}(A, B) :- R_1(A), R_2(A, B), R_3(B)$, the problem $\ourprob(Q, k, D)$ is NP-hard.
\end{lemma}
\begin{proof}
We give a reduction from PVCB problem that takes as input $G=(U, V, E)$ and $k$.
 \par
 Given an instance of the PVCB problem, we construct an instance $D$ of \ourprob\ as follows for  $Q_{2-path}(A, B) :- R_1(A), R_2(A, B), R_3(B)$. For every vertex $u \in U$, we include a tuple $t_u = (u)$ in $R_1(A)$; similarly, for every vertex $v \in V$, we include a tuple $t_v = (v)$ in $R_3(B)$. For every edge $(u, v) \in E$ where $u \in U, v \in V$, we include a tuple $t_{uv} = (u, v)$ in $R_2(A, B)$. Therefore the output tuples in $Q_{2-path}(D)$ corresponds to the edges in $E$. 
 \par
 We can see that PVCB has a solution of size $C$ if and only if $\ourprob(Q_{2-path}, k, D)$ has a solution of size $C$ for the same $k$. The only if direction is straightforward. For the other direction, note that by removing a tuple of the form $t_{uv}$ exactly one tuple from the output can be removed. Hence if any such tuple is chosen by the solution of \ourprob, it can be replaced by either $t_u$ or $t_v$ without increasing cost or decreasing the number of output tuples deleted. 
\end{proof}
}

We use a reduction from the PVCB problem when all the simplification steps fail in Algorithm~\ref{algo:dichotomy}. An example, proved in Appendix~\ref{sec:path-2}, is the query $Q_{2-path}(A, B) :- R_1(A), R_2(A, B), R_3(B)$ for paths of length two, where $A, B$ can correspond to $U, V$ for an easy reduction. However, the relations and attributes can form a more complex pattern like $R_1(A, B), R_2(B,H), R_3(C, E, B), R_4(E, H, A)$. The goal is to show that when the algorithm return false, there is a way to assign edges and vertices to the relations in $Q$. 

\begin{lemma}\label{lem:properties}
If none of the simplification steps in Algorithm~\ref{algo:dichotomy} can be applied for an intermediate query $Q$, all the following hold:
\begin{itemize}
\item[(1)] $Q$ has at least three relations, all with at least one attribute.
\item[(2)] There are no common attributes in all relations.
\item[(3)] For any two attributes $A, B \in \attr(Q)$, $\rel(A) \neq \rel(B)$, \ie, two attributes cannot belong to the exact same set of relations.
\item[(4)] The relations form a single connected components in terms of their attributes.
\end{itemize}
\end{lemma}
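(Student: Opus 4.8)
The plan is to observe that the hypothesis ``none of the simplification steps can be applied to $Q$'' means precisely that each of the seven guard conditions in Algorithm~\ref{algo:dichotomy} evaluates to \false\ on $Q$, and then to read off (1)--(4) as the logical negations of the relevant conditions. Since Algorithm~\ref{algo:dichotomy} checks the guards sequentially, ``no simplification applies'' is equivalent to the conjunction of the negations of all seven guards, so each of them is available as a hypothesis.

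For (1): since \caseempty\ fails we have $\rel(Q) \neq \emptyset$ (and $\attr(Q) \neq \emptyset$); since \casesingle\ and \casetwo\ fail we have $|\rel(Q)| \notin \{1,2\}$; hence $|\rel(Q)| \geq 3$. For the claim that every relation has at least one attribute, I would argue by contradiction: if some $R_i \in \rel(Q)$ had $\attr(R_i) = \emptyset$, then vacuously $\attr(R_i) \subseteq \attr(R_j)$ for every $R_j \neq R_i \in \rel(Q)$, so the guard of \casesubset\ would hold and that step would apply, contradicting the hypothesis. Hence $\attr(R_i) \neq \emptyset$ for all $i$.

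For (2), (3), (4): these are immediate negations. Failure of \casecommon\ says there is no $A \in \attr(Q)$ that lies in every relation of $\rel(Q)$, which is exactly (2). Failure of \casecooccur\ says there is no pair $A, B \in \attr(Q)$ with $\rel(A) = \rel(B)$, which is (3). Failure of \casedecompose\ says $Q$ cannot be split into $s \geq 2$ maximal connected components of the graph $G_Q$ on vertex set $\rel(Q)$ with edges between relations sharing an attribute; since a nonempty $\rel(Q)$ always decomposes into $s \geq 1$ such components, we must have $s = 1$, i.e.\ $G_Q$ is connected, which is (4).

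I do not expect any genuine obstacle here: the lemma is essentially a restatement of what it means for all seven guards to be false, and the only clause that requires more than a one-line negation is ``all with at least one attribute'' in (1), which is obtained as the contrapositive of the \casesubset\ guard applied to a hypothetical empty-attribute relation. The one point to be careful about when writing it up is to invoke the guards in a way consistent with the sequential order of Algorithm~\ref{algo:dichotomy} (in particular, \casesubset\ being reachable and failing), so that each negation is legitimately part of the hypothesis.
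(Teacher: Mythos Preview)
Your proposal is correct and follows the same approach as the paper: the paper's proof is a one-sentence contrapositive (``If these properties do not hold, simplification steps \casetwo\ (and \casesingle), \casecommon, \casecooccur, and \casedecompose\ can still be applied''), and you have simply unpacked this into the explicit negations of the individual guards, including the \casesubset\ argument for the ``every relation has at least one attribute'' clause. There is no substantive difference.
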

\begin{proof}
If these properties do not hold, simplification steps \casetwo\ (and \casesingle), \casecommon, \casecooccur, and \casedecompose\ can still be applied.
\end{proof}

Next we argue that if none of the simplification steps can be applied, there is a reduction from the partial vertex cover (PVCB) problem described in Lemma~\ref{lem:chain-3-nphard}. We divide the proof in two cases. First in Lemma~\ref{lem:nphard-disjoint} we show that if there are two relations in $Q$ that do not share any attribute, then $\ourprob$ is NP-hard for $Q$. Then in Lemma~\ref{lem:nphard-overlap} we show that if this condition does not hold, \ie, if for any two relations in $Q$ there is a common attribute, even then $\ourprob$ is NP-hard for $Q$.

\begin{lemma}\label{lem:nphard-disjoint}
If none of the simplification steps in Algorithm~\ref{algo:dichotomy} can be applied for an intermediate query $Q$, and if there are two relations $R_i, R_j$ in $\rel(Q)$ such that $\attr(R_i) \cap \attr(R_j) = \emptyset$, then $\ourprob(Q, k, D)$ is NP-hard for some $k, D$. 
\end{lemma}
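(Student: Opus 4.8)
The plan is to reduce from the PVCB problem of Definition~\ref{def:pvc}, using the two disjoint relations $R_i,R_j$ to encode the two sides $U,V$ of the input bipartite graph. Given a PVCB instance $G(U,V,E)$ with parameter $k$, I would build an instance $D$ for $\ourprob(Q,k,D)$ (keeping the same $k$) so that $Q(D)$ is in bijection with $E$, the tuples of $R_i$ correspond to the vertices of $U$, and the tuples of $R_j$ to the vertices of $V$. Concretely, put one tuple in $R_i$ for each $u\in U$ with \emph{every} attribute of $R_i$ set to the identifier $u$, and symmetrically one tuple in $R_j$ for each $v\in V$; then, for every other relation $R_\ell$, create one tuple $t^{\ell}_{e}$ per edge $e=(u,v)$ whose attribute $A$ takes value $u$ if $A\in\attr(R_i)$, value $v$ if $A\in\attr(R_j)$ (disjoint cases, since $\attr(R_i)\cap\attr(R_j)=\emptyset$), and a fresh identifier encoding $e$ otherwise. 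Intuitively, $R_i$ and $R_j$ store the two vertex sides while every other relation stores a private, ``wired'' copy of the edge set, so that the only way to satisfy the natural join is to agree on a single edge.

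I would then establish three facts. First, $Q(D)$ is in bijection with $E$: picking $t_u\in R_i$, $t_v\in R_j$, and $t^{\ell}_{e}$ from each remaining $R_\ell$ yields an output tuple $o_e$, and these are all the output tuples. Second, deleting $t_u$ from $R_i$ deletes exactly $\{o_e: u\in e\}$ (and symmetrically for $R_j$), whereas deleting a tuple of any other relation deletes at most one output tuple, or --- in the degenerate case where all attributes of that relation lie on one side --- is equivalent to deleting the corresponding vertex tuple of $R_i$ or $R_j$. Third, therefore an optimal $\ourprob$ solution can be assumed to delete tuples only from $R_i$ and $R_j$: any deleted ``edge-like'' tuple $t^{\ell}_e$ may be swapped for the $R_i$- or $R_j$-tuple of one of $e$'s endpoints without increasing the cost or decreasing the number of deleted output tuples (Lemma~\ref{lem:properties}(1) guarantees such relations exist, so the swap is always available; this is exactly the move used for the core query $Q_{2-path}(A,B):-R_1(A),R_2(A,B),R_3(B)$ in Lemma~\ref{lem:chain-3-nphard}). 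Combining the three facts, $\ourprob(Q,k,D)$ has a solution of size $C$ if and only if $G$ has a vertex set of size $C$ covering at least $k$ edges, so the NP-hardness of PVCB transfers to $\ourprob$ for $Q$.

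The delicate point --- and the step I expect to be the real obstacle --- is the first fact: that the join produces \emph{no spurious output tuples} in which different relations latch onto different edges. A careless value assignment fails here; for instance, if a connected block of relations touches the rest of $Q$ only through attributes of $R_i$, then once the $R_i$-value $u$ is fixed the block can independently pick any edge incident to $u$, and $|Q(D)|$ blows up past $|E|$. The remedy uses the structural guarantees of Lemma~\ref{lem:properties}: since $Q$ is connected (part (4)) but has no attribute common to all relations (part (2)) and no two attributes with the same relation set (part (3)), one can decompose $Q$ around $R_i$ and $R_j$ and, for any sub-block that attaches only to the $R_i$-side (resp.\ only to the $R_j$-side), index that block's tuples by the relevant \emph{vertex} rather than by the edge, so the block contributes multiplicity exactly one and behaves like extra copies of a vertex tuple; fresh edge identifiers are reserved only for attributes lying on a path that genuinely bridges the $R_i$-side and the $R_j$-side. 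One must also treat the special case in which every attribute belongs to $R_i$ or to $R_j$ (no bridging attribute at all) and check that the whole gadget keeps $|D|$ polynomial in $|G|$ --- which is immediate, as each relation receives at most $\max(|U|,|V|,|E|)$ tuples. Since the statement concerns an \emph{intermediate} query $Q$, combining this reduction with the hardness-propagation arguments for \casecommon, \casecooccur, and \casedecompose then yields NP-hardness for the query originally given to Algorithm~\ref{algo:dichotomy} as well.
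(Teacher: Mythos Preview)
Your proposal is correct and follows essentially the same route as the paper: reduce from PVCB with $R_i,R_j$ encoding $U,V$, classify the remaining relations by which side(s) they attach to (vertex-indexed if one-sided, edge-indexed if bridging), and use the swap argument to assume an optimal $\ourprob$ solution deletes only from $R_i,R_j$. The paper packages the one-sided/bridging classification as an explicit procedure that grows two components $C_i,C_j$ outward from $R_i,R_j$ and assigns every attribute a $u$- or $v$-value (rather than your edge identifiers on bridging attributes), but this is a cosmetic difference; your anticipated ``spurious output tuples'' obstacle and its fix are exactly the crux of the paper's construction.
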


\begin{proof}
Consider two relations $R_i, R_j$ in $\rel(Q)$ such that $\attr(R_i) \cap \attr(R_j) = \emptyset$. We give a reduction from the PVCB problem (Definition~\ref{def:pvc}). Let $G(U, V, E)$ and $k$ be the input to the PVCB problem. The value of $k$ remains the same and we create instance $D$ as follows. For every vertex $u \in U$, we include a tuple $t_u = (u, u, \cdots, u)$ to $R_i$ such that for every attribute $A \in \rel(R_i)$, $t_u.A = u$. Similarly, for every vertex $v \in V$, we include a tuple $t_v = (v, v, \cdots, v)$ to $R_j$ such that for every attribute $B \in \rel(R_j)$, $t_v.B = v$. 
\par
Now we grow two connected components of relations $C_i$ and $C_j$ starting with $C_i = \{R_i\}$ and $C_j = \{R_j \}$. We do the following until no more relations can be added to either $C_i$ or $C_j$. Consider any relation $R_{\ell}$ that has not been added to $C_i$ or $C_j$. (a) If $R_\ell$ shares an attribute $S$ with some relation in $C_i$ \emph{and} an attribute $T$ with some relation in $C_j$, it gets tuples $t_{uv}$ for every edge $(u, v) \in E$, where $t_{uv}.S = u$ and $t_{uv}.T = v$; for any other attribute $C \notin \attr(C_i) \cup \attr(C_j)$, it gets $t_{uv}.C = u$. (b) Otherwise, if $R_\ell$ shares an attribute only with some relation in $C_j$ \emph{but} no attributes with any relation in $C_i$, it gets tuples $t_{v}$ for every vertex $v \in V$ where $t_{v}.S = v$ for all attributes $S$ in $R_{\ell}$; further we add $R_{\ell}$ to $C_j$. (c) Otherwise, if $R_{\ell}$ shares an attribute only with some relation in $C_i$ (but not with any relation in $C_j$), then $R_{\ell}$ gets tuples $t_{u}$ for every vertex $u \in U$ where $t_{u}.S = u$ for all attributes $S$ in $R_{\ell}$; further, we add $R_{\ell}$ to $C_i$. 
\par
An example construction is shown in Figure~\ref{fig:reduc_from_PVCB}.
\par
Note that every relation in $Q$ corresponds to either vertices in $U$ or $V$, or edges in $E$. Clearly at least $R_i$ corresponds to $U$ and $R_j$ corresponds to $V$. We argue that at least one table $R_{\ell}$ has tuples $t_{uv}$ corresponding to the vertices: this holds from Lemma~\ref{lem:properties} since there are at least three relations, and the relations form a single connected components, otherwise relations that belong to $C_i$ and the ones that belong to $C_j$ would form two separate connected components.

\begin{figure}[!ht]
    \centering
        \begin{tikzpicture}[
            baseline=-20pt,
            node distance = 4mm and 12mm,
            start chain = going below,
            V/.style = {circle, draw, 
                fill=#1, 
                inner sep=0pt, minimum size=1mm,
                node contents={}},
                every fit/.style = {ellipse, draw=#1, inner ysep=1mm, 
                inner xsep=5mm}
            ]
            \foreach \i in {1, ..., 5} 
            {
                \ifnum\i=5
                    \node (n2\i) [V=mygreen, below right=2mm and 13mm of n14,
                                  label={[text=mygreen]right:$v_{\i}$}];
                \else
                    \node (n1\i) [V=myblue,on chain,
                                  label={[text=myblue]left:$u_{\i}$}];
                    \node (n2\i) [V=mygreen, above right=3mm and 13mm of n1\i,
                                  label={[text=mygreen]right:$v_{\i}$}];
                \fi
            }
            \node [myblue,fit=(n11) (n14),label=above:$U$] {};
            \node [mygreen,fit=(n21) (n25),label=above:$V$] {};
            \draw[-, shorten >=1mm, shorten <=1mm]{
                    (n11) edge (n21)    (n11) edge (n22)
                    (n12) edge (n23)
                    (n13) edge (n24)    (n13) edge (n25)
                    (n14) edge (n24)    (n14) edge (n25)};
        \end{tikzpicture}
        \\
        \begin{tabular}{| c |}
            \multicolumn{1}{c}{$R_1$}\\
    		\hline A \\\hline\hline
    		$u_1$\\
    		$u_2$\\
    		$u_3$\\
    		$u_4$\\
    		\hline
    	\end{tabular}
    	\begin{tabular}{|c|}
    	    \multicolumn{1}{c}{$R_2$}\\
    	    \hline B \\\hline\hline
    	    $v_1$ \\
    	    $v_2$ \\
    	    $v_3$ \\
    	    $v_4$ \\
    	    $v_5$ \\
    	    \hline
    	\end{tabular}
    	\begin{tabular}{|c|c|}
    	    \multicolumn{2}{c}{$R_3$}\\
    	    \hline A & C\\\hline\hline
    	    $u_1$ & $u_1$\\
    	    $u_2$ & $u_2$\\
    	    $u_3$ & $u_3$\\
    	    $u_4$ & $u_4$\\
    	    \hline
    	\end{tabular}
    	\begin{tabular}{|c|c|}
    	    \multicolumn{2}{c}{$R_4$}\\
    	    \hline E & B\\\hline\hline
    	    $v_1$ & $v_1$\\
    	    $v_2$ & $v_2$\\
    	    $v_3$ & $v_3$\\
    	    $v_4$ & $v_4$\\
    	    $v_5$ & $v_5$\\
    	    \hline
    	\end{tabular}
    	\begin{tabular}{|c|c|}
    	    \multicolumn{2}{c}{$R_5$}\\
    	    \hline C & E\\\hline\hline
    	    $u_1$ & $v_1$\\
    	    $u_1$ & $v_2$\\
    	    $u_2$ & $v_3$\\
    	    $u_3$ & $v_4$\\
    	    $u_3$ & $v_5$\\
    	    $u_4$ & $v_4$\\
    	    $u_4$ & $v_5$\\
    	    \hline
    	\end{tabular}
    	\begin{tabular}{|c|c|}
    	    \multicolumn{2}{c}{$R_6$}\\
            \hline C & F\\\hline\hline
            $u_1$ & $u_1$\\
    	    $u_2$ & $u_2$\\
    	    $u_3$ & $u_3$\\
    	    $u_4$ & $u_4$\\
    	    \hline
    	\end{tabular}
    \caption{An example construction from Lemma~\ref{lem:nphard-disjoint}: for the given instance of PVCB in the figure, and query $Q_2(\cdots) :-$ $R_1(A)$, $R_2(B)$, $R_3(A, C)$, $R_4(E, B)$, $R_5(C, E),$ $R_6(C, F)$ from the introduction, we create $D$ for $\ourprob(Q_2, k, D)$ as shown. First $R_1 \leftarrow U, R_2 \leftarrow V.$ Then $R_3 \leftarrow U, R_4 \leftarrow V$. Then $R_5 \leftarrow E$ and $R_6 \leftarrow U$.}
    \label{fig:reduc_from_PVCB}
\end{figure}

\par
 Now we claim that $PVCB$ has a solution of size $M$ if and only if $\ourprob(Q, k, D)$ has a solution of size $M$. Note that the output tuples in $Q(D)$ correspond to the edges in $E$.
\par 
(only if) If $PVCB$ has a solution of size $M$, we can remove the corresponding tuples from $R_i$ and $R_j$, and remove at least $k$ output tuples corresponding to the edges.
\par
(if) If $\ourprob$ has a solution of size $M$, we can assume wlog. that the input tuples are only chosen from $R_i$ and $R_j$:  if an input tuple $t_u$ is chosen from $R_{\ell} \in C_i$ (respectively, $C_j$), we replace it with the corresponding tuple in $R_i$ (respectively, $R_j$). If an input tuple $t_{uv}$ is chosen from a relation that shares attributes with both $C_i$ and $C_j$, we replace it with the corresponding $t_u$ from $R_i$. This removes at least the original tuples as before without increasing the cost. Now tuples from $R_i$ and $R_j$ corresponds to a solution of the PVCB problem that removes at least $k$ edges corresponding to the output tuples removed in $\ourprob(Q, k, D)$. 
\end{proof}

Next, we show the NP-hardness for the other case when any two relations share at least one attribute. We again give a reduction from the PVCB problem, where the input is a bipartite graph $G(U, V, E)$ and integer $k$. Given the relations in $Q$, we identify three relations where the tuples can correspond to $U, V, $ and $UV$ respectively. However, unlike the reduction in Lemma~\ref{lem:nphard-disjoint}, we may assign a constant value $*$ to all the tuples for some attributes in some tables. Due to the problem stated before Lemma~\ref{lem:hard-decompose}, we will ensure that no table in $Q$ receives such a constant value for all attributes. Otherwise this table will have a single tuple $(*, *, \cdots, *)$, and removing this tuple, will remove all output tuples from $Q(D)$ with only cost 1. We aim to identify the following (the relation names are chosen wlog.): (i) a relation $R_1$ corresponding to $U$, where attributes correspond to $U$ or $*$ (and not $V$), and at least one attribute corresponds to $U$, (ii)  a relation $R_3$ corresponding to $V$, where attributes correspond to $V$ or $*$ (and not $U$), and at least one attribute corresponds to $V$, and (iii) a relation $R_2$ corresponding to $E$, where attributes correspond to $U, V$ or $*$, and there are at least two attributes corresponding to $U$ and $V$ that together capture the edges in $E$. The other relations can have attributes corresponding to $U, V, $ or $*$, but no relation can have only attributes that take the constant value $*$. We give the formal reduction below by proving the following lemma.

\begin{lemma}\label{lem:nphard-overlap}
If none of the simplification steps in Algorithm~\ref{algo:dichotomy} can be applied for an intermediate query $Q$, and if for any two relations $R_i, R_j$ in $\rel(Q)$ it holds that $\attr(R_i) \cap \attr(R_j) \neq \emptyset$, then $\ourprob(Q, k, D)$ is NP-hard for some $k, D$. 
\end{lemma}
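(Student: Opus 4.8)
The plan is to reduce the PVCB problem (Definition~\ref{def:pvc}) to $\ourprob(Q,k,D)$, generalizing the reduction used for $Q_{2-path}$ in Lemma~\ref{lem:chain-3-nphard} (this is the complement of the disjoint case handled in Lemma~\ref{lem:nphard-disjoint}). Given a PVCB instance consisting of a bipartite graph $G(U,V,E)$ and an integer $k$, we keep $k$ and build an instance $D$ in which the tuples of $Q(D)$ are in bijection with the edges $E$, deleting a tuple of a ``$U$-relation'' corresponds to choosing a vertex of $U$, and deleting a tuple of a ``$V$-relation'' corresponds to choosing a vertex of $V$.

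The heart of the argument is a labeling step: assign every attribute of $Q$ one of the types $U$-type, $V$-type, or \emph{constant} ($*$), and single out three relations $R_1,R_2,R_3$ so that (i) $R_1$ has at least one $U$-type attribute and no $V$-type attribute, (ii) $R_3$ has at least one $V$-type attribute and no $U$-type attribute, (iii) $R_2$ has at least one $U$-type \emph{and} one $V$-type attribute (the designated pair that will encode the edges), (iv) no relation has all of its attributes constant, and (v) the $U$-type attributes are pairwise linked through co-occurrence in relations, and likewise the $V$-type attributes, so that in any output tuple all $U$-type attributes are forced to one common value and all $V$-type attributes to one common value. To obtain such a labeling we use Lemma~\ref{lem:properties} together with the hypotheses of the lemma. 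Because \casesubset\ fails, the family $\{\attr(R_i)\}$ is not totally ordered by inclusion, so we may take $R_1$ with an inclusion-minimal attribute set and $R_3$ with $\attr(R_1)\not\subseteq\attr(R_3)$; then both $\attr(R_1)\setminus\attr(R_3)$ and $\attr(R_3)\setminus\attr(R_1)$ are nonempty (minimality of $\attr(R_1)$ rules out $\attr(R_3)\subsetneq\attr(R_1)$), and we let these be the $U$-type and $V$-type ``seeds'' while the attributes in $\attr(R_1)\cap\attr(R_3)$ become constant. Minimality of $\attr(R_1)$ is precisely what guarantees (iv): a relation with $\attr(R)\subseteq\attr(R_1)\cap\attr(R_3)$ would satisfy $\attr(R)=\attr(R_1)$, hence $\attr(R_1)\subseteq\attr(R_3)$, a contradiction. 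The attributes lying in neither $\attr(R_1)$ nor $\attr(R_3)$ are then distributed into the $U$-type and $V$-type classes, using the fact that $G_Q$ is connected (Lemma~\ref{lem:properties}(4)) to keep each class within a single co-occurrence component and to produce the mixed relation $R_2$.

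Given the labeling, $D$ is built directly: every $U$-relation receives $|U|$ tuples $t_u$ (each $U$-type attribute set to $u$, each constant attribute set to $*$), every $V$-relation receives $|V|$ tuples $t_v$ analogously, and every $E$-relation (in particular $R_2$) receives $|E|$ tuples $t_{uv}$ (each $U$-type attribute set to $u$, each $V$-type attribute set to $v$, constants to $*$). Because the $U$-type attributes are linked and the $V$-type attributes are linked, any tuple of $Q(D)$ fixes a single $u\in U$ on the $U$-type attributes and a single $v\in V$ on the $V$-type attributes, and then every $E$-relation can be matched exactly when $(u,v)\in E$; hence $Q(D)$ is in bijection with $E$. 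Moreover, deleting $t_u$ from \emph{any} $U$-relation removes exactly the output tuples for the edges incident to $u$, and deleting $t_{uv}$ from an $E$-relation removes exactly the single output tuple for $(u,v)$.

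It remains to show PVCB has a solution of size $M$ iff $\ourprob(Q,k,D)$ does. For one direction, from a vertex subset $S$ covering $\ge k$ edges we delete $t_u\in R_1$ for $u\in S\cap U$ and $t_v\in R_3$ for $v\in S\cap V$, removing $\ge k$ output tuples. Conversely, any solution can be rewritten without increasing cost or decreasing the number of removed output tuples so that it uses only tuples of $R_1$ and $R_3$: a chosen $t_u$ from another $U$-relation is replaced by $t_u\in R_1$ (same removed set), a chosen $t_v$ from another $V$-relation by $t_v\in R_3$, and a chosen $t_{uv}$ from an $E$-relation by $t_u\in R_1$ (whose removed set is a superset); the resulting tuples project to a vertex subset of $U\cup V$ of size $\le M$ covering $\ge k$ edges. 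The step I expect to be the main obstacle is the labeling in the second paragraph: simultaneously realizing the three role-relations $R_1,R_2,R_3$, ensuring that no relation is all-constant (which would otherwise give a cost-$1$ solution deleting every output tuple), and keeping the $U$-type and $V$-type attribute classes each within one co-occurrence component so that $Q(D)$ does not blow up beyond $E$; the remainder is a routine adaptation of Lemma~\ref{lem:chain-3-nphard}.
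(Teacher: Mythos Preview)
Your high-level strategy is the same as the paper's: reduce from PVCB by labeling every attribute as $U$-type, $V$-type, or constant, anchored at two ``vertex'' relations, with a third ``mixed'' relation carrying the edges; then argue that no relation is all-constant and that any solution can be pushed into the two vertex relations. Your inclusion-minimality argument for property~(iv) is correct and plays the same role as the paper's (P3), which instead uses the relation of \emph{minimum cardinality}.

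The substantive difference is in the order of selection of the three anchor relations, and this is exactly where your sketch is incomplete. You pick $R_1$ and $R_3$ first and then hope connectivity of $G_Q$ will let you ``distribute'' the remaining attributes and locate a mixed $R_2$; you yourself flag this as the obstacle, and indeed it is not clear from your outline why a relation with both a $U$-type and a $V$-type attribute must exist under your labeling, nor why the two type-classes stay linked so that $Q(D)$ does not overcount beyond $E$. The paper sidesteps this by choosing $R_2$ \emph{before} $R_3$: take $R_2$ to have the smallest intersection $\allattr_{12}$ with $R_1$, then pick any $A\in\allattr_{12}$ and any $R_3$ with $A\notin\allattr_{23}$ (such $R_3$ exists since no attribute lies in all relations). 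Minimality of $|\allattr_{12}|$ forces some $B\in\allattr_{23}\setminus\allattr_{12}$, so $R_2$ automatically contains a $U$-type attribute $A\in\allattr_1\setminus\allattr_3$ and a $V$-type attribute $B\in\allattr_3\setminus\allattr_1$; no connectivity argument is needed to produce the mixed relation. The paper then assigns \emph{every} attribute outside $\allattr_1\cup\allattr_3$ to the $U$-side rather than distributing, so that the $V$-type attributes are exactly $\allattr_3\setminus\allattr_1$ and are all linked through $R_3$ itself. In short, your proposal has the right architecture but leaves the labeling step as a promissory note; the paper's ``pick $R_2$ via smallest intersection, then derive $R_3$'' is the concrete device that discharges it.
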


\begin{proof}
 We give a reduction from the PVCB problem, where the input is a bipartite graph $G(U, V, E)$ and integer $k$ (see Definition~\ref{def:pvc}). 
 \par
 We first observe the following property in addition to the properties $(1)-(4)$ in Lemma~\ref{lem:properties}.
 \begin{itemize}
 \item[(P1)] \emph{Any relation in $Q$ has at least two attributes.}
 \end{itemize}
 Suppose not, i.e., there is only one attribute $A$ in $R_i$. Since $R_i$ shares attributes with all relations in $Q$, $A$ appears in all relations in $Q$, violating property (2) from Lemma~\ref{lem:properties}.
 
 Recall that $\allattr_i = \attr(R_i)$ denotes the attributes in $R_i$. We also use 
 $$\allattr_{ij} = \allattr_i \cap \allattr_{j}$$
 to denote the the common attributes in $R_i$ and $R_j$, where $i < j$. Note that by property (1) of Lemma~\ref{lem:properties}, there are at least three relations in $Q$.
 
 \begin{itemize}
 \item[(P2) ] \emph{There exist three relations, wlog., $R_1, R_2, R_3$, and two attributes $A, B$ in $Q$  such that $A \in \allattr_{12} \setminus \allattr_{23}$ and $B \in \allattr_{23} \setminus \allattr_{12}$.} In other words, $A$ belongs to $R_1, R_2$ but not in $R_3$, and $B$ belongs to $R_2, R_3$ but not in $R_1$.  
 \end{itemize}
 
 To see (P2), start with the relation with the \textbf{smallest number of attributes} as $R_1$, breaking ties arbitrarily. Consider its intersection with all other relations (all are non-empty by assumption), and let $R_2$ be the relation with smallest number of attributes in the intersection $\allattr_{12}$ with $R_1$. If  any attribute in $\allattr_{12}$ belongs to $\allattr_{2j}$ for all $j > 2$, then it violates property (2) in Lemma~\ref{lem:properties}. Therefore, for all $A \in \allattr_{12}$, there is a relation $R_j$ such that $A \notin \allattr_{2j}$. Pick any such $A$ and the corresponding $R_j$. Now consider $\allattr_{2j}$. We claim that there exists $B \in \allattr_{2j} \setminus \allattr_{12}$. Suppose not. Then $\allattr_{2j} \subseteq \allattr_{12}$. Combining with the fact that there is an $A \in \allattr_{12} \setminus \allattr_{2j}$, $\allattr_{2j} \subset \allattr_{12}$  (a proper subset). This violates the assumption that $R_2$ is the relation with smallest number of attributes in the intersection of $\allattr_{12}$ with $R_1$. For simplicity, we assume $R_j = R_3$ wlog.
 \par
 Next we give the reduction from the PVCB problem by creating an instance $D$ for the same $k$ of \ourprob. $R_1, R_3$ correspond to $U, V$ respectively, whereas $R_2$ corresponds to the edges in $E$. 
 \begin{itemize}
\item  \emph{We include a tuple $t_u$ for each $u \in U$ to $R_1$ in $D$, where (i) for all $C \in \allattr_{13}$, $t_u.C = *$ (all attributes in $\allattr_{13}$ are constant attributes), and (ii) for all $C \in \allattr_1 \setminus \allattr_{13}$, $t_u.C = u$. 
 }
 \item \emph{We include a tuple $t_v$ for each $v \in V$ to $R_3$ in $D$, where (i) for all $C \in \allattr_{13}$, $t_v.C = *$, and (ii) for all $C \in \allattr_3 \setminus \allattr_{13}$, $t_v.C = v$.
 }
\end{itemize}
  Note that the assignment of values $U, V$ to attributes above is consistent, \ie, no attribute can get both $U$ and $V$. The above assignment is propagated to all other relations $R_j, j \neq 1, 3$, including $R_2$ (and by assigning any non-assigned attributes to $U$) as follows. For any other $R_j$, where $j \neq 1, 3$,
  \begin{itemize}
\item  If $\allattr_j$ includes an attribute $C \in \allattr_{13}$, it gets constant values in all tuples.
\item If $\allattr_j$ includes an attribute $C \in \allattr_1 \setminus \allattr_{13}$, it gets values corresponding to $U$.
\item If $\allattr_j$ includes an attribute $C \in \allattr_3 \setminus \allattr_{13}$, it gets values corresponding to $V$.
\item If $\allattr_j$ includes an attribute $C \notin \allattr_1 \cup \allattr_{3}$, it gets values corresponding to $U$.
\end{itemize}
After this assignment, if attributes in $\allattr_j$ are assigned to only constant and $U$, we insert tuples of the form $t_u$ as in $R_1$.  If attributes in $\allattr_j$ are assigned to only constant and $V$, we insert tuples of the form $t_v$ as in $R_3$. If attributes in $\allattr_j$ are assigned to both $U$ and $V$ (and possibly some constant attributes), we insert tuples of the form $t_{u, v}$ for each edge $(u, v) \in E$ in the same way. Hence at least $R_2$ gets tuples of the form $t_{uv}$ corresponding to the edges. The output $Q(D)$ of the query corresponds to the edges in $E$, but there can be multiple attributes for the vertices $u, v$ of an edge $(u, v)$.

\begin{figure}[!ht]
    \centering
    \begin{tabular}{|c|c|c|c|c|}
        \multicolumn{5}{c}{$R_1$}\\
        \hline A & P1 & P2 & E & F\\\hline\hline
        $v_1$ & $v_1$ & $v_1$ & $*$ & $*$\\
        $v_2$ & $v_2$ & $v_2$ & $*$ & $*$\\
        $v_3$ & $v_3$ & $v_3$ & $*$ & $*$\\
        $v_4$ & $v_4$ & $v_4$ & $*$ & $*$\\
        $v_5$ & $v_5$ & $v_5$ & $*$ & $*$\\
       
        \hline
    \end{tabular}
    \begin{tabular}{|c|c|c|c|c|}
        \multicolumn{5}{c}{$R_2$}\\
        \hline B & P1 & P2 & E & F\\\hline\hline
        $u_1$ & $v_1$ & $u_1$ & $*$ & $*$\\
        $u_1$ & $v_2$ & $u_1$ & $*$ & $*$\\
        $u_2$ & $v_3$ & $u_2$ & $*$ & $*$\\
        $u_3$ & $v_4$ & $u_3$ & $*$ & $*$\\
        $u_3$ & $v_5$ & $u_3$ & $*$ & $*$\\
        $u_4$ & $v_4$ & $u_4$ & $*$ & $*$\\
        $u_4$ & $v_5$ & $u_4$ & $*$ & $*$\\
        \hline
    \end{tabular}
    \begin{tabular}{|c|c|c|}
        \multicolumn{3}{c}{$R_3$}\\
        \hline P1 & C1 & C2\\\hline\hline
        $v_1$ & $u_1$ & $u_1$\\
        $v_2$ & $u_1$ & $u_1$\\
        $v_3$ & $u_2$ & $u_2$\\
        $v_4$ & $u_3$ & $u_3$\\
        $v_5$ & $u_3$ & $u_3$\\
        $v_4$ & $u_4$ & $u_4$\\
        $v_5$ & $u_4$ & $u_4$\\
        \hline
    \end{tabular}~~~~
    \begin{tabular}{|c|c|c|c|}
        \multicolumn{4}{c}{$R_4$}\\
        \hline P2 & C1 & C3 & F\\\hline\hline
        $v_1$ & $u_1$ & $u_1$ & $*$\\
        $v_2$ & $u_1$ & $u_1$ & $*$\\
        $v_3$ & $u_2$ & $u_2$ & $*$\\
        $v_4$ & $u_3$ & $u_3$ & $*$\\
        $v_5$ & $u_3$ & $u_3$ & $*$\\
        $v_4$ & $u_4$ & $u_4$ & $*$\\
        $v_5$ & $u_4$ & $u_4$ & $*$\\
        \hline
    \end{tabular}~~~~
    \begin{tabular}{|c|c|c|}
        \multicolumn{3}{c}{$R_5$}\\
        \hline E & F & C1\\\hline\hline
        $*$ & $*$ & $u_1$\\
        $*$ & $*$ & $u_2$\\
        $*$ & $*$ & $u_3$\\
        $*$ & $*$ & $u_4$\\
        \hline
    \end{tabular}
    
    \caption{An example construction for Lemma~\ref{lem:nphard-overlap}. Consider the PVCB instance from Figure \ref{fig:reduc_from_PVCB} and $Q_2(\cdots):-$ $R_1(A, P1, P2, E, F)$, $R_2(B, P1, P2, E, F)$, $R_3(P1, C1, C2)$, $R_4(P2, C1, C3, F)$, $R_5(E, F, C1)$ from the instruction. We create an instance $D$  for $\ourprob(Q_2, k, D)$ as shown. Here $R_5$ gets picked as the relation with the smallest set of attributes as $\widehat{\mathbf{R_1}}$ in the reduction (with a hat and boldfaced). Now, $\mathbb{A}_3$ has the smallest intersection with $\mathbb{A}_5$, so $R_3$ is chosen as $\widehat{\mathbf{R_2}}$. The attribute $C$ in the intersection is chosen as $\widehat{\mathbf{A}}$. $C \notin \attr(R_1)$, hence $R_1$ is chosen as $\widehat{\mathbf{R_3}}$. The common attribute $P1$ of $R_1, R_3$ (not in $R_5$)is chosen as $\widehat{\mathbf{B}}$. The common attributes $E, F$ of $R_1, R_5$ are assigned $*$. $ C1$ gets $U$, and $A, P1, P2$ get $V$, every other attribute gets $U$.}
    \label{fig:nphard-overlap}
\end{figure}
\par
An example construction is shown in Figure~\ref{fig:nphard-overlap}.
\par
Now we argue that PVCB has a solution of size $P$ if and only if \ourprob\ has a solution of size $P$. 
\par
(only if) If PVCB has a solution $S$ of size $P$ that covers $\geq k$ edges in $G$, we remove the corresponding tuples $t_u, t_v$ from $R_1$ and $R_3$, which will remove the set of $\geq k$ tuples corresponding to these edges in the output.
\par
(if) If \ourprob\ has a solution of size $P$ that removes at least $k$ output tuples, we argue that wlog., we can assume that the tuples are removed only from $R_1$ and $R_3$. This holds because of the following property:
 \begin{itemize}
 \item[(P3) ] \emph{No relation $R_j$ in $Q$ can have all attributes assigned to constant value $*$.}
 \end{itemize}
 Otherwise, by construction, $\allattr_j \subseteq \allattr_{13} \subset \allattr_1$ (since at least $A \in \allattr_1 \setminus \allattr_{13}$). Therefore $R_j$ has fewer attributes than $R_1$ violating the assumption that $R_1$ is the relation with the smallest number of attributes. Hence, we have the following property:
  \begin{itemize}
 \item[(P4) ] \emph{All relations in $Q$ have tuples of the form $t_u$ for $U$, $t_v$ for $V$, or $t_{uv}$ for $E$.}
 \end{itemize}
 If any tuple is removed from a $R_j$ that has $t_u$-s, we replace it with the corresponding tuple from $R_1$, and if any tuple is removed from a $R_j$ that has $t_v$-s, we replace it with the corresponding tuple from $R_3$. If any tuple of the form $t_{uv}$ is removed, we replace it with $t_u$ from $R_1$ that removes at least the same number of output tuples as before without increasing the cost. Hence, in any solution of \ourprob, we can assume that tuples are removed only from $R_1, R_3$, which corresponds to a solution of the PVCB problem where we remove the corresponding vertices to cover the edges corresponding to the output tuples.
 \end{proof}

Combining the above results, we get the following lemma:

\begin{lemma}\label{lem:hardness}
If Algorithm~\ref{algo:dichotomy} returns \false\ for a query $Q_0$, \ie, $\isptime(Q_0) = \false$, there exists $k$ and $D$ such that $\ourprob(Q_0, k, D)$ is NP-hard. 
\end{lemma}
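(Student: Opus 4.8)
The plan is to combine the two ingredients already developed: the base-case hardness that holds when none of the seven simplification steps applies (Lemmas~\ref{lem:nphard-disjoint} and~\ref{lem:nphard-overlap}), and the propagation of hardness up the \rectree\ through the three recursive steps. First I would examine where $\isptime(Q_0)$ can return \false: inspecting Algorithm~\ref{algo:dichotomy}, the value \false\ is produced only at the final \textbf{else} branch, while the three recursive steps \casecommon, \casecooccur, and \casedecompose\ merely pass the recursively computed value upward (the last one through a conjunction $\wedge_{i=1}^s \isptime(Q^i)$). Hence $\isptime(Q_0)=\false$ forces the \rectree\ $T$ of $Q_0$ (Definition~\ref{def:rectree}) to contain at least one \false\ leaf; let $Q'$ be the intermediate query that is the parent of such a leaf, so that by construction none of the seven simplification steps can be applied to $Q'$.

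For this $Q'$, Lemma~\ref{lem:properties} gives that $Q'$ has at least three relations, each with at least one attribute, no attribute common to all relations, no two attributes whose relation-sets coincide, and that its relations form a single connected component. I would then do a two-way case split on $Q'$. If there exist $R_i, R_j \in \rel(Q')$ with $\attr(R_i)\cap\attr(R_j)=\emptyset$, Lemma~\ref{lem:nphard-disjoint} supplies a reduction from PVCB and shows $\ourprob(Q', k', D')$ is NP-hard for some $k', D'$. Otherwise every pair of relations in $Q'$ shares an attribute, and Lemma~\ref{lem:nphard-overlap} yields the same conclusion via a (more delicate) PVCB reduction. In either case $\ourprob(Q', k', D')$ is NP-hard.

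It remains to transfer this hardness from $Q'$ up to the root $Q_0$. Every edge on the path from $Q'$ to $Q_0$ in $T$ corresponds to an application of \casecommon, \casecooccur, or \casedecompose, because the four non-recursive steps (\casesingle, \casetwo, \casesubset, and the empty case) return \true\ and spawn only a single \true\ leaf, hence cannot lie on a path to a \false\ leaf. Lemma~\ref{lem:hard-decompose-general} is stated exactly for this situation: for any intermediate query $Q'$ in the \rectree\ of $Q_0$, NP-hardness of $\ourprob(Q', k', D')$ implies NP-hardness of $\ourprob(Q_0, k, D)$. Invoking it with the $Q', k', D'$ produced above completes the argument.

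The genuinely delicate part is entirely absorbed into Lemma~\ref{lem:hard-decompose-general} (building on Lemmas~\ref{lem:hard-common}, \ref{lem:hard-cooccur}, and~\ref{lem:hard-decompose}): propagating through \casedecompose\ requires padding every sibling connected component with $L = P+1$ dummy tuples, where $P$ is the output size of the hard component, arguing that an optimal solution never profits from deleting a padding tuple (since $o_2 < o_1$), and verifying that the resulting instance blow-up remains polynomial in data complexity even under repeated \casedecompose\ steps interleaved with \casecommon\ and \casecooccur\ (here one uses $P \le |D'|^{p}$ with $p$ constant). Once that lemma is in hand, Lemma~\ref{lem:hardness} is just the composition ``locate a \false\ leaf $\Rightarrow$ apply the base-case dichotomy of Lemmas~\ref{lem:nphard-disjoint}/\ref{lem:nphard-overlap} $\Rightarrow$ propagate the hardness up via Lemma~\ref{lem:hard-decompose-general}'', which is what I would write out in full.
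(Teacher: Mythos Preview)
Your proposal is correct and follows essentially the same approach as the paper: locate a \false\ leaf in the \rectree, establish NP-hardness of its parent $Q'$ via Lemma~\ref{lem:nphard-disjoint} or Lemma~\ref{lem:nphard-overlap}, and then propagate hardness up to the root $Q_0$. The paper's proof is slightly more verbose in the propagation step---it splits into two cases, invoking Lemma~\ref{lem:hard-decompose-general} only when \casedecompose\ appears on the path and otherwise inducting edge-by-edge via Lemmas~\ref{lem:hard-common} and~\ref{lem:hard-cooccur}---whereas you correctly observe that Lemma~\ref{lem:hard-decompose-general} is stated generally enough to handle both cases in one shot.
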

\begin{proof}
Consider the \rectree\ $T$ of $Q_0$. Note that the leaves of $T$ are \true\ or \false. Note that if $\isptime(Q) = \false$ for any intermediate query $Q$, there must exist a path from $Q$ to a \false\ leaf and vice versa. We apply induction on the length of the shortest path from an intermediate node to a leaf. For the base case, \ie, when the length = 1, the intermediate query $Q$ at the node has a \false\ child. If there are two relations in $\rel(Q)$ that do not share any attribute, by Lemma~\ref{lem:nphard-disjoint}, $\ourprob$ for $Q$ is NP-hard. Otherwise, \ie, if no such two relations exist, then by Lemma~\ref{lem:nphard-overlap}, $\ourprob$ for $Q$ is NP-hard.
\par
Now consider the path from an NP-hard query $Q$ with a \false\ child to the root $Q_0$. 
If at least one node along this path has $\geq 2$ children, \ie, if at least once \casedecompose\ has been invoked, $Q_0$ is NP-hard by Lemma~\ref{lem:hard-decompose-general}.
\par
Otherwise, the path from $Q_0$ to a \false\ leaf is unique, and the parent of the \false\ leaf is NP-hard by the base case. 
Suppose the induction hypothesis holds for the intermediate node $Q$, \ie, $Q$ is NP-hard, where the shortest path length to a \false\ leaf is $\ell$, and consider the intermediate node $Q'$ that is parent of $Q$, and for which the shortest path length to a is $\ell + 1$. If $Q$ is formed from $Q'$ by \casecommon, then by Lemma~\ref{lem:hard-common} $Q'$ is NP-hard. Otherwise,  $Q$ is formed from $Q'$ by \casecooccur,  and by Lemma~\ref{lem:hard-cooccur} $Q'$ is NP-hard, proving the hypothesis. Repeating this argument, $Q_0$ is again NP-hard.
\end{proof}

\subsection{Algorithms}\label{sec:algo}
If Algorithm~\ref{algo:dichotomy} returns true for a query $Q$, we can find an optimal solution of $\ourprob(Q, k, D)$ by running Algorithm~\ref{algo:computeopt}. For each of the simplification step except the trivial case of empty query, we give a procedure that optimally solves that case, possibly using subsequent calls to $\computeopt$. Note that the first trivial case can never be reached if the original query is non-empty, for instance, if $Q(A):-R_1(A), R_2(A), R_3(A)$, before \casecommon\ is applied, instead \casesubset\ will be invoked, directly returning true in Algorithm~\ref{algo:dichotomy} and returning an optimal solution in Algorithm~\ref{algo:computeopt}. In Section~\ref{sec:proc-details}, we discuss the helper procedures used in Algorithm~\ref{algo:computeopt}. The pseudocodes of these procedures may suggest that $\computeopt$ is invoked recursively within these procedures. However, to ensure polynomial data complexity, we solve the problem bottom-up, and instead of recursive calls, use look ups from these pre-computed values, which is discussed in Section~\ref{sec:poly-impl}. Due to space constraints, {\bf all the pseudocodes of Section~\ref{sec:proc-details} are given in the appendix.}
\begin{algorithm}[t]\caption{Computing the optimal solution of $\ourprob(Q, k, D)$}\label{algo:computeopt}
{\footnotesize
	\begin{codebox}
		\Procname{$\computeopt(Q, k, D)$}
		\li \If $\rel(Q) = \emptyset$ or $\attr(Q) = \emptyset$  \hfill \textbf{/* (\caseempty) */}
		\li /* \textit{this case is never reached for a non-empty query} */ 
		\li \Do \Return $\emptyset$ 
		\li \ElseIf $Q$ has one relation  \textbf{/* (\casesingle) */}
		\li \Do \Return $\singlerel(Q, k, D)$
		\li \ElseIf $Q$ has two relations  \textbf{/* (\casetwo) */}
		\li \Do \Return $\tworel(Q, k, D)$
		\li \ElseIf $\exists R_i \in \rel(Q)$ such that $\forall~ R_j \neq R_i \in \rel(Q)$, \\
		$\attr(R_i) \subseteq \attr(R_j)$  \textbf{/* (\casesubset) */}
		\li \Do \Return $\onesubset(Q, k, D, R_i)$
		\li \ElseIf $\exists A \in \attr(Q)$ such that \\ for all relations $R_i \in \rel(Q)$, $A \in \attr(R_i)$ \\  \textbf{/* (\casecommon) */}
		\li \Return $\commonattr(Q, k, D, A)$   
		\li \ElseIf  $\exists A, B \in \attr(Q)$ such that $\rel(A) = \rel(B)$ \\  \textbf{/* (\casecooccur) */}
		\li \Return $\cooccur(Q, k, D, A, B)$   
		\li \ElseIf $Q$ can be decomposed into \emph{maximal connected components}\\ (see text) $Q^1, \cdots, Q^s$ where $s \geq 2$ \\ \textbf{/* (\casedecompose) */}
		\li \Do \Return $\decompose(Q, k, D, Q^1, \cdots, Q^k)$ 
		\li \Else fail	
	\end{codebox}
	}
\end{algorithm}

\subsubsection{Details of the procedures in Algorithm~\ref{algo:computeopt}}\label{sec:proc-details}

\textbf{1. Procedure $\singlerel(Q, k, D)$.~} Suppose $Q(\allattr_i) :- R_i(\allattr_i)$ be the query. 
We return any $k$ tuples from relation $R_i$ as the solution. Since there is no joins, each output tuple corresponds to a unique input tuple, and we can remove any $k$ input tuples to remove $k$ output tuples. 
\par
\cut{
\begin{algorithm}[ht]\caption{when $Q$ has two relations}\label{algo:tworel}
{\footnotesize
	\begin{codebox}
		\Procname{$\tworel(Q, k, D)$}
		\li Let $Q(\allattr_1 \cup \allattr_2) :- R_1(\allattr_1), R_2(\allattr_2)$ (wlog.)
		\li \If $\allattr_1 \cap \allattr_2 = \emptyset$
		\li \Do Let $n_1, n_2$ be the number of tuples in $R_1, R_2$ in $D$
		\li \If $n_1 \leq n_2$ \Do
		\li \Return any $\ceil{\frac{k}{n_1}}$ tuples from $R_1$
		\li \Else
		\li \Return any $\ceil{\frac{k}{n_2}}$ tuples from $R_2$ \End
		\li \Else \Do
		\li Let $\allattr_{12} = \allattr_1 \cap \allattr_2$
		\li Let $\aval_1, \cdots, \aval_g$ be all the distinct value combinations \\of attributes in $\allattr_{12}$ in $D$
		\li Let $G_i = $ set of tuples $t$ in $R_1$ such that $t.\allattr_{12} = \aval_i$, \\and let $m_i = |G_i|$, for $i = 1$ to $g$. 
		\li Let $H_i = $ set of tuples $t$ in $R_2$ such that $t.\allattr_{12} = \aval_i$, \\ and let $r_i = |H_i|$, for $i = 1$ to $g$.
		\li For $i = 1$ to $g$, let $p_i = \max(m_i, r_i)$
		\li wlog. assume that $p_1 \geq p_2 \geq \cdots \geq p_g$ \\ (else sort and re-index)
		\li Set numtup = 0, $i = 1$, $O = \emptyset$
		\li \While ${\tt numtup} \leq k$ \Do
		\li  \If $m_i \leq r_i$ \Do
		\li	 $S_i = G_i$ 
		\li \Else $S_i = H_i$ \End
		\li Include any tuple $t$ from $S_i$ to ${\tt O}$. $S_i = S_i \setminus \{t\}$
		\li \If $S_i = \emptyset$ \Do
		\li $i = i+1$ \End
		\li ${\tt numtup} = {\tt numtup} + 1$ \End
		\End
		\li \Return ${\tt O}$
		\End
	\end{codebox}
	}
\end{algorithm}
}
\textbf{2. Procedure $\tworel(Q, k, D)$.~} Suppose $Q(\allattr_1 \cup \allattr_2) :- R_i(\allattr_1), R_j(\allattr_2)$ be the query (wlog.). The pseudocode is given in Algorithm~\ref{algo:tworel}. There can be two cases.
\par
(a) If $R_1, R_2$ do not share any attribute, \ie, $\allattr_1 \cap \allattr_2 = \emptyset$, all tuples from $R_1$ join with all tuples from $R_2$ to form $Q(D)$. Let $n_1, n_2$ be the number of tuples from $R_1, R_2$ respectively.  Suppose $n_1 \leq n_2$. Then any tuple in $R_1$ removes exactly $n_2$ tuples from the output, which is higher than the number of tuples that a tuple from $R_1$ removes. In particular, consider any optimal solution $OPT$ and suppose it includes $s_1$ tuples from $R_1$ and $s_2$ tuples from $R_2$ that together remove at least $k$ output tuples. Removing the overlaps, we have,
$N_{OPT} = s_1n_2 + s_2 n1- s_1s_2$. Consider another solution $S$ that replaces all $s_2$ of $R_2$-tuples from $OPT$ by $s_2$ tuples from $R_1$ that have not been chosen yet. Now the number of output tuples deleted $N_S = (s_1 + s_2)n_2 = N_{OPT} + (n_2 - n_1)s_2 + s_1s_2 \geq N_{OPT}$ since $n_2 \geq n_1$. Hence we always get an optimal solution by removing tuples from $R_1$, and any $\ceil{\frac{k}{n_1}}$ of $R_1$-tuples remove at least $k$ output tuples.
\par
(b) Otherwise, let $\allattr_{12} = \allattr_1 \cap \allattr_2$ be the common attributes in $R_1, R_2$. Let $\aval_1, \aval_2, \cdots, \aval_g$ be all the distinct value combinations of $\attrset_{12}$ in $D$. 
We partition $R_1$ and $R_2$ into $G_1, \cdots, G_g$ and $H_1, \cdots, H_g$ respectively based on these values of $\attrset_{12}$. Hence, when we fix any $\aval_i$, all $R_1$-tuples in $G_i$ join by a cross product with all $R_2$-tuples in $H_i$. Therefore, the number of output tuples in $Q(D)$ is  $\sum_{i = 1}^g m_i. n_i$, where $m_i = |G_i|, n_i = |H_i|$ for $i = 1$ to $g$. First we sort all these groups in decreasing order of `profits' $p_i = \max(m_i, n_i)$, which is the maximum number of output tuples removed by removing only tuple from each group. wlog., assume  $p_1 \geq p_2 \geq \cdots \geq p_g$. Consider any group $i$: if $m_i \leq n_i$, we call $R_1$ the \emph{better relation} of group $i$, else $R_2$ is better. We get profit $p_i$ by removing tuples from the better relation of group $i$. Following the argument in case (a), it is more beneficial to remove from the better relation. Furthermore, in any optimal solution $OPT$ if a tuple $t_j$ has been chosen from group $j > i$ skipping a tuple $t_i$ from the better relation of group $i$, $t_j$ can be replaced by $t_i$ without increasing the cost and removing no fewer than the original number of output tuples. Hence we can assume wlog., that the optimal solution greedily chooses from the better relation of the groups $1, 2, \cdots, g$ in this order, which is implemented in the algorithm.
\par

\cut{
\begin{algorithm}[ht]\caption{When the attributes $\allattr_i$ of $R_i$ form a subset of all other relations in $Q$}\label{algo:onesubset}
{\footnotesize
	\begin{codebox}
		\Procname{$\onesubset(Q, k, D, R_i)$}
		\li Let $\aval_1, \cdots, \aval_g$ be all the distinct value combinations \\of attributes in $\allattr_{i}$ in $R_i$ in $D$\\
		(and they correspond to $g$ tuples in $R_i$)
		\li For every $\aval_j$, compute the number $m_j$ of output tuples\\ $t$ in $Q(D)$ such that $t.\allattr_i = \aval_j$. 
		\li Wlog. assume that $m_1 > m_2 > \cdots > m_g$ for $\aval_1, \aval_2, \cdots, \aval_g$
		\li Let $s$ be the smallest index such that $\sum_{j = 1}^s m_j \geq k$
		\li \Return the tuples from $R_i$ that correspond to $\aval_1, \cdots, \aval_s$ 
	\end{codebox}
	}
\end{algorithm}
}

\textbf{3. Procedure $\onesubset(Q, k, D, R_i)$.} Here the attributes in  $R_i$ form a subset of all other relations in $Q$. The pseudocode is given in Algorithm~\ref{algo:onesubset}. For every tuple in $R_i$, we compute the number of output tuple it contributes to, and choose greedily from a decreasing order on these numbers until $k$ output tuples are chosen. 
\par
The algorithm returns optimal solution since given any optimal solution of $\ourprob$ in this case, we can assume wlog. that all the tuples in the optimal solution belong to $R_i$. If any tuple $t$ is chosen from $R_{\ell} \neq R_i$, we can choose the corresponding tuple $t' = t.\allattr_i$ from $R_i$ instead, without increasing the cost and decreasing the number of output tuples deleted. Therefore the procedure always chooses from this sorted list, in decreasing order on $m_j$.

\cut{
\begin{algorithm}[ht]\caption{When all relations in $Q$ have a common attribute $A$ (the actual poly-time implementation is discussed in Section \ref{sec:poly-impl})}\label{algo:commonattr}
{\footnotesize
	\begin{codebox}
		\Procname{$\commonattr(Q, k, D, A)$}
		\li Let $a_1, \cdots, a_g$ be all the values of $A$ in $D$.
		\li We partition $D$ into $D_1, \cdots, D_g$, where all tuples $t$ in all tables \\
		in $D_i$ have $t.A = a_i$, $i = 1$ to $g$.
		\li 	Create a table $\optcost[1 \cdots g][1 \cdots k]$ where $\optcost[i][\ell]$ \\ 
		denotes the optimal solution to $\ourprob(Q, \ell, D)$ where the input \\ tuples can only be 
		chosen from $D_1, \cdots, D_i$.
		The \\corresponding solutions are stored in $\optsol[1 \cdots g][1 \cdots k]$.
		\li \For $i = 1$ to $g$ \Do
			\li \For $s = 1$ to $k$ \Do
				\li $\optcost[i][s] = \optcost[i-1][s]$ (also set $\optsol$)
				\li \For $m = 1$ to $s-1$ \Do
						\li Let $S_{i, m} = \computeopt[Q, m, D_i]$
						\li Let $c_{i, m} = |S_{i, m}|$
						\li \If $\optcost[i][s] > \optcost[i-1][s - m] + c_{i, m}$ \Do
							\li $\optcost[i][s] = \optcost[i-1][s - m] + c_{i, m}$\\ (and update $\optsol$)
						\End		
				\End
			\End
		\End
		\li \Return $\optsol[g][k]$.
	\end{codebox}
	}
\end{algorithm}
}
\textbf{4. Procedure $\commonattr(Q, k, D, A)$.} Here the attribute $A$ belongs to all relations in $Q$. The pseudocode is given in Algorithm~\ref{algo:commonattr}. First we partition the instance $D$ into $D_1, \cdots, D_g$, corresponding to $a_1, \cdots, a_g$, which are the all possible values of $A$ in $D$. All tuples $t$ in all relations in $D_i$ have $t.A = a_i$. Note that $Q(D)$ is a disjoint union of $Q(D_1), \cdots, Q(D_i)$.
\par
Here we run a dynamic program to compute the optimal solution $\optsol$
and its cost $\optcost$. Here $\optcost[i][s]$ denotes the minimum number of input tuples to remove at least $s$ output tuples from $Q(D)$ where the input tuples can only be chosen from $D_1$ to $D_i$.
This problem shows an optimal sub-structure property and can be solved with the following dynamic program:
{\small
\begin{equation*}
    \begin{split}
        \optcost[i][s] = \min
        & \begin{cases}
            \optcost[i-1][s] \\
            \min_{m = 1}^{s-1}\Big \{ \optcost[i-1][s - m] + c_{i, m} \Big\} \\
        \end{cases}
    \end{split}
\end{equation*}
}
where $c_{i, m}$ denotes the minimum number of input tuples \emph{only from} $D_i$ that would remove at least $m$ output tuples from $Q(D_i)$.  In other words, the above rules say that, to remove at least $s$ output tuples from $Q(D)$ where the input tuples can only be chosen from $D_1, \cdots, D_i$, we can either choose input tuples from $D_1, \cdots, D_{i-1}$ that achieve this goal, or we can remove at least $m$ tuples $Q(D_i)$ by removing $c_{i, m}$ tuples only from $D_i$, and take its union with the optimal solution for the rest of the $s-m$ output tuples that have to be removed from $Q(D)$ by only removing tuples from $D_1, \cdots, D_{i-1}$. 
\par
Since $k$ is bounded by the number of output tuples in $Q(D)$ (polynomial in data complexity) and $g$ is bounded by the number of input tuples,  the number of cells in $\optcost$ and $\optsol$ is polynomial in data complexity. However, the procedure $\commonattr(Q, k, D, A)$ is invoked in combination with other procedures in Algorithm~\ref{algo:computeopt}, which makes the total complexity non-obvious. In Section~\ref{sec:poly-impl} we discuss how the entire Algorithm~\ref{algo:computeopt} can be implemented in polynomial data complexity. 

\par

\cut{
\begin{algorithm}[ht]\caption{When two attributes $A, B$ appear in the same set of relations in $Q$}\label{algo:cooccur}
{\footnotesize
	\begin{codebox}
		\Procname{$\cooccur(Q, k, D, A, B)$}
		\li Replace both $A, B$ by a new attribute $C \notin \attr(Q)$\\
		 in all relations where $A$ and $B$ appear
		\li Let the new query be $Q_{AB \rightarrow C}$
		\li  Initialize $D' = D$
		\li If $A, B \in \attr(R_i)$, replace all original tuple $t \in R_i$ in $D$ by \\ $t'$ in $D'$
		 such that $t'.C = (t.A, t.B)$, and \\ $t'.F = t.F$ for all other attributes $\neq A, B$ in $R_i$ 
		\li Let $S = \computeopt(Q_{AB \rightarrow C}, k, D')$ 
		\li If $S$ includes any tuple $t$ from any $R_i$ in $Q$ such that \\$A, B \in \rel(R_i)$, change 
		all such tuples to their original form \\by replacing $t.C = (a, b)$ to $t.A = a, t.B = b$
		\li \Return $S$
	\end{codebox}
	}
\end{algorithm}
}
\textbf{5. Procedure $\cooccur(Q, k, D, A, B)$.~}  Here $\rel(A) = \rel(B)$. The pseudocode is given in Algorithm~\ref{algo:cooccur}. We simply replace both $A, B$ with a new attribute $C$ in all these relations, and assign values $t.C = (a, b)$ where $t.A = a, t.B = b$ in all such relations. Then we call the $\computeopt$ procedure to get the optimal solution for this new query and instance. Since the inputs and outputs of the new query $Q_{AB \rightarrow C}$ and instance $D'$ have a one-to-one correspondence with the inputs and outputs of the original query $Q$ and instance $D$, an optimal solution (if it exists) of the latter gives an optimal solution to the former by changing the tuples back to their original form.

\cut{
\begin{algorithm}[ht]\caption{When $Q$ can be decomposed into $s > 1$ connected components of relations}\label{algo:decompose}
{\footnotesize
	\begin{codebox}
		\Procname{$\decompose(Q, k, D, Q^1, \cdots, Q^s)$}
		\li Set $Q_1 = Q^1$. 
		\li \For $i = 2$ to $s$ \Do
			\li Set $Q_2 = Q^i$.
			\li \textit{/*Compute $\optsol_{i, s}$ below for the optimal solution to}\\
			 \textit{remove at least $s$ tuples from the output of $Q_i = $ join of $Q^1, \cdots, Q^i$*/}
			\li Let $m_1 = |Q_1(D)|$, and $m_2 = |Q_2(D)|$
				\For $s = 1$ to $k$ \Do
					\li $\optsol_{i, s} = \min_{k_1, k_2 : k_1, k_2 \leq s~\textit{and}~ k_1m_2 + k_2m_1 - k_1k_2 \geq s}$\\
			\hfill $ |\optsol_{i-1, k_1}| + |\computeopt(Q_2, k_2, D_2)|$
				\End
			\li $Q_{i+1}$ = join of $Q_i$ and $Q^{i+1}$.
			\li $i = i+1$.
		\End
		\li \Return $\optsol_{s, k}$. 
	\end{codebox}
	}
\end{algorithm}
}
\textbf{6. Procedure $\decompose(Q, k, D, Q^1, \cdots$, $Q^s)$.} Here $Q^1, \cdots, Q^s$ form maximal connected components, \ie, no relation in $Q^i$ share any attribute with any relation in $Q^j$ for $i \neq j$. The pseudocode is given in Algorithm~\ref{algo:decompose}, which generalizes the first part of Algorithm~\ref{algo:tworel} when two relations combine by a cross-product. The main difference is that, when the sub-queries are arbitrary and not a single relation, then the optimal solution may be required to select tuples from both sides of a partition. Although the output tuples from both partitions still join by cross-product, we do not know apriori how many output tuples to remove from each partition. Consider two disjoint components $Q_1, Q_2$, and let $m_1, m_2$ be the size of $Q_1(D)$ and $Q_2(D)$ respectively. Note that if an optimal solution $OPT$ removes $k_1$ tuples from $Q_1(D)$ and $k_2$ tuples from $Q_2(D)$, the total number of tuples removed from the join of $Q_1, Q_2$ is $k_1 m_2 + k_2 m_1 - k_1 k_2$, taking into account the tuples removed in the overlap. Algorithm~\ref{algo:decompose} takes two components at a time and aims to find the optimal solution for an arbitrary $s \leq k$, therefore we go over all possible requirements $k_1, k_2$ from $Q_1(D)$ and $Q_2(D)$ that satisfy $k_1m_2 + k_2m_1 - k_1k_2 \geq s$. The overall polynomial data complexity of Algorithm~\ref{algo:computeopt} is discussed in Section~\ref{sec:poly-impl}. 

\cut{
\begin{example}\label{eg:decompose-both}
\red{give the example that we may have to delete from both sides}
\end{example}
}

\subsubsection{Poly-time implementation of Algorithm~\ref{algo:computeopt}}\label{sec:poly-impl}

Algorithm~\ref{algo:computeopt} recursively calls itself through the helper procedures, and calls $\computeopt(Q', k', D')$ for many intermediate queries and values of $k' \leq k$. Suppose Algorithm~\ref{algo:computeopt} is invoked for $\computeopt(Q_0, k, D)$.  To ensure polynomial running time in data complexity, we first build the \rectree\ $T$ of $Q_0$ using Algorithm~\ref{algo:dichotomy} as defined in Definition~\ref{def:rectree}. Since Algorithm~\ref{algo:dichotomy} runs on the schema of $Q$, it is trivially polynomial in data complexity. If there is a leaf with value \false\ we know that the query $Q$ is NP-hard, and Algorithm~\ref{algo:computeopt} would return fail. Otherwise, using this tree and instance $D$, we solve $\computeopt(Q', s, D)$ for each intermediate query $Q'$ and value $s = 1 \cdots k$ in a bottom-up pass. These solutions and their sizes are simply looked up in Algorithms~\ref{algo:commonattr}, \ref{algo:cooccur}, and \ref{algo:decompose} instead of running $\computeopt$ recursively. Since there are a constant number of nodes in the \rectree\ (in data complexity, depends only on the number of relations and attributes), the maximum value of $k$ is $|Q_0(D)|$ (polynomial data complexity), and given these values all the algorithms run in polynomial time. Hence, we get a polynomial running time of Algorithm\ref{algo:computeopt}.

\par
Together with Lemma~\ref{lem:hardness}, this proves Theorem~\ref{thm:dichotomy}.

\section{Approximations}
\label{sec:approx}

\newcommand{\primaldual}{{\textsc{PrimalDual}}}

In this section, we discuss approximations for optimal solutions to $\ourprob(Q, k, D)$, where full CQ $Q$ contains $p$ relations in its body, $D$ is a given instance of the schema and $k$ is the number of output tuples we want to intervene on. In particular, we give a $p$-approximation for a general setting of our problem.


Recall from Lemma \ref{lem:hardness} that all the simplification steps in Algorithm \ref{algo:dichotomy} fail when $\ourprob(Q, k, D)$ is NP-hard. In such cases, we can model the problem as an instance of the \emph{Partial Set Cover problem} ($k'$-PSC).

\begin{definition}
    Given a set of elements $U$, a collection of subsets $\mathcal{S}\subseteq 2^U$, a cost function on sets $c:\mathcal{S}\rightarrow \mathcal{Q}^+$ and a positive integer $k'$, the goal of the Partial Set Cover problem ($k'$-PSC) is to pick the minimum cost collection of sets from $\mathcal{S}$ that covers at least $k$ elements in $U$.
\end{definition}\label{def:kPSC}

Observe the similarity between $\ourprob(Q, k, D)$ and $k'$-PSC in that we want to pick the smallest number of input tuples that intervene on at least $k$ output tuples. If there is a cost associated with deleting a specific input tuple, the cost function $c$ can be used to reflect this. Thus, sets correspond to input tuples from relations in the body of CQ $Q$ and elements to output tuples in $Q(D)$. Also, $k'=k$.

In \cite{GKS04}, Gandhi et al. give a primal-dual algorithm for partial set cover that generalizes the classic primal dual algorithm for set cover. If every element appears in at most $p$ sets in $\mathcal{S}$, they obtain a $p$-approximation for the problem (see Theorem 2.1 in \cite{GKS04}). Via a simple approximation preserving reduction given below, we also obtain a $p$-approximation for $\ourprob(Q, k, D)$ (proof in Appendix~\ref{sec:thm-approx}).

\begin{theorem}\label{thm:f_approx}
    $\ourprob(Q, k, D)$ has a $p$-approximation, which can be computed via an approximation preserving reduction to $k'$-PSC in poly-time.
\end{theorem}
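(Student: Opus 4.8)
The plan is to exhibit an approximation-preserving reduction from $\ourprob(Q,k,D)$ to the partial set cover problem $k'$-PSC in which the maximum element frequency is at most $p$, and then invoke the $f$-approximation of Gandhi, Khuller, and Srinivasan~\cite{GKS04}. \textbf{The reduction.} Given $Q$, $D$, and $k$, first evaluate $Q(D)$; this is polynomial in data complexity since $p$ and $|\allattr|$ are constants, so $|Q(D)| \le n^p$. Let the universe be $U := Q(D)$. For every input tuple $t \in R_i$ (over all $i = 1,\dots,p$), create a set $S_t := \{\, o \in Q(D) : o.\allattr_i = t \,\}$, i.e., the set of output tuples eliminated by deleting $t$ from $D$, and give $S_t$ unit cost (or, in a weighted variant, the cost of deleting $t$). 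Set $k' := k$. All the $S_t$'s can be computed in a single pass over $Q(D)$, so the reduction is polynomial in data complexity, and the number of sets is at most $n$.

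\textbf{Correctness of the reduction.} Since $Q$ is a \emph{full} CQ \emph{without self-joins}, every output tuple $o \in Q(D)$ is witnessed by exactly one tuple from each of the $p$ relations, namely $o.\allattr_i \in R_i$; hence $o$ lies in precisely $p$ of the sets $S_t$, so the maximum element frequency is $f = p$. Moreover, a collection of input tuples $\{t_1,\dots,t_C\}$ removes at least $k$ tuples from $Q(D)$ if and only if $\{S_{t_1},\dots,S_{t_C}\}$ covers at least $k=k'$ elements of $U$. Thus feasible solutions of the two instances are in cost-preserving bijection: the optimum costs coincide, and any $\alpha$-approximate solution of the $k'$-PSC instance maps back to an $\alpha$-approximate solution of $\ourprob(Q,k,D)$, so the reduction is approximation-preserving. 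Composing with Theorem~2.1 of \cite{GKS04}, which gives an $f$-approximation for $k'$-PSC, yields a $p$-approximation for $\ourprob(Q,k,D)$, computable in polynomial time in data complexity.

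\textbf{Main obstacle.} The only delicate point is the frequency count, and I would be careful to argue it cleanly: it is precisely the absence of projections (fullness) together with the absence of self-joins that forces each output tuple to be hit by exactly $p$ input tuples, one per relation. (If projections were allowed, one output tuple could be produced by many input-tuple combinations and the frequency could be as large as $n$, destroying the $p$-approximation — this is why the theorem is stated for full CQs without self-joins.) I would also explicitly verify the feasible-region correspondence in \emph{both} directions, since an approximation-preserving reduction needs more than a bijection between optima; here this is immediate from the one-to-one correspondence between input tuples and sets and between output tuples and universe elements.
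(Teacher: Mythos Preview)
Your proposal is correct and follows essentially the same approach as the paper's proof: both construct the $k'$-PSC instance by taking $U = Q(D)$, one set per input tuple containing the output tuples it supports, observe that fullness and the absence of self-joins force each element to lie in exactly $p$ sets, and then invoke the $f$-approximation of~\cite{GKS04}. If anything, your write-up is more careful than the paper's in explicitly arguing both directions of the feasible-solution correspondence and in isolating why the frequency bound needs the full, self-join-free hypothesis.
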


\cut{
\begin{proof}
    We prove that the reduction preserves the approximation guarantee in two steps: 1) given an instance of $\ourprob(Q, k, D)$, how to construct an instance of $k'$-PSC, and 2) given a solution to $k'$-PSC, how to recover a solution to $\ourprob(Q, k, D)$.
    
    Given the full CQ $Q$ containing $p$ relations in its body, namely $R_1, R_2, \cdots, R_p$, we create a set per input tuple in the $p$ relations, and an element per output tuple in $Q(D)$. Each set contains elements that correspond to the output tuples resulting from the join between the associated input tuple and tuples from other relations in $Q$. It is well-known that the natural join on $R_1, R_2, \cdots, R_p$ can be computed in poly-time. Moreover, exactly one tuple in each of the $p$ relations participates in the join operation that produces a particular output tuple. Therefore, each element in the $k'$-PSC instance belongs to exactly $p$ sets. As a result, the size of the $k'$-PSC instance that we create is polynomial in the data complexity of $\ourprob(Q, k, D)$. Moreover, there is a one-on-one correspondence between instances of the two problems.
    
    Lastly, given a $p$-approximate solution to $k'$-PSC, we recover a solution to $\ourprob(Q, k, D)$ by picking the tuples associated with the sets in the solution, say $I$. Observe that the sets in $I$ cover $k' = k$ elements in $U$. Thus, removing the corresponding input tuples from $\ourprob(Q, k, D)$ will intervene on at least $k$ output tuples.
\end{proof}

}

\newcommand{\eat}[1]{}

\eat{

\subsection{2-approximation for $Q(A, B, E1, E2, E3) :- R_1(A, E_1), R_2(A, B, E_2), R_3(B, E_3)$}

Next, we illustrate the viability of improved approximation guarantees beyond that of the above result by giving a 2-approximation algorithm for an instance of the problem defined on 3 tables, where tables 1 and 2 share join attributes collectively denoted $A$, and tables 2 and 3 share join attributes collectively denoted $B$. In addition to these two attributes, each table also has a set of attributes that are unique to it, which we call $E_1, E_2,$ and $E_3$ for the respective tables. While each of these labels may represent a set of different attributes, we will combine them into a single attribute without loss of generality. Thus, in the rest of the section, $A, B, E_1, E_2,$ and $E_3$ are individual attributes that define an instance of our problem. We note that while these instances are restrictive in that there are only 3 tables and we only perform a chain join operation, this is representative of a large fraction of joins performed in practice. Moreover, our problem continues to be NP-hard even on this restricted set of instances.

First, we construct a graphical view of the problem. To this end, define a bipartite (multi-)graph as follows. The two sides of the graph contain vertices that represent individual input tuples in tables $R_1$ and $R_3$. The number of edges between two vertices $(a, e_1) \in R_1$ and $(b, e_3)\in R_3$ is the number of different value of $e_2$ such that $(a, b, e_2) \in R_2$. In particular, if there is no tuple in $R_2$ containing the pair $(a, b)$, then there is no edge between $(a, e_1)$ and $(b, e_3)$ for any values of $e_1$ and $e_3$. Thus, every edge is associated with a unique tuple $(a, b, e_2)$ --- we say that the edge is {\em colored} by this tuple. Note that if we consider all the edges corresponding to a single color, i.e., a single triple $(a, b, e_2)$, then they form a complete bipartite graph between the following two sets of vertices: $\{(a, e_1) \in R_1\}$ and $\{(b, e_3) \in R_3\}$. 

Note that each edge corresponds to an output tuple, each vertex in $R_1$ and $R_2$ correspond to input tuples in the respective tables, and each edge color class corresponds to an input tuple in $R_2$. Thus, on this graph, our problem can be equivalently defined as follows: remove the smallest number of vertices and/or edge color classes so that the total number of edges removed is at least $k$. 

An interesting observation is that the graph is locally isometric for two vertices that only differ in their values of $e_1$ for the vertices in $R_1$, or in their values of $e_3$ for the vertices in $R_3$. In other words, two vertices which share the same value of $a$ in $R_1$, or of $b$ in $R_3$ have edges of exactly the same color to exactly the same set of vertices. Moreover, two color classes that share the same value of $(a, b)$ but differ in $e_2$ are also isometric in the sense they have edges between exactly the same set of vertices. Finally, observe that if we fix a color class, say $(a, b, e_2)$, then the edges form a complete bipartite graph between vertices $\{(a, e_1)\in R_1: e_1\in E_1\}$ and  $\{(b, e_3)\in R_3: e_3\in E_3\}$. 

We use these observations to establish a structural property on the optimal solution to our problem. We say that a solution is {\em integral} if:
\begin{itemize}
    \item it removes all or none of the vertices in $R_1$ that have the same value of $a$.
    \item it removes all or none of the vertices in $R_3$ that have the same value of $b$.
    \item it removes all of none of the edge color class in $R_2$ that have the same value of $(a, b)$.
\end{itemize}
Relaxing the requirement slightly, we say that a solution is {\em nearly integral} if the above conditions are violated for at most one vertex or one edge color class. 

\begin{lemma}
\label{lma:nearint}
There is an optimal solution to the problem that is nearly integral. 
\end{lemma}

Using the structural property in the previous lemma, we can further consolidate the structure of the bipartite graph. In $R_1$, we coalesce all vertices with the same value of $a$, i.e., the set of vertices $R_1(a) = \{(a, e_1)\in R_1: e_1\in E_1\}$ into a single vertex with weight equal to the cardinality of $R_1(a)$. Similarly, we coalesce all vertices with the same value of $b$ in $R_3$, i.e., the set of vertices $R_3(b) = \{(b, e_3)\in R_3: e_3\in E_3\}$ into a single vertex with weight equal to the cardinality of $R_3(b)$. Each edge that was previously incident on a vertex is now incident on the supervertex that this vertex has been coalesced into. Note that each edge color class, which was previously a complete bipartite graph, is now a set of parallel edges -- in particular, the edge color class $(a, b, e_1)$ is a  set of parallel edges between the vertices $R_1(a)$ and $R_3(b)$. We replace these parallel edges by a single edge and drop the edge color since every edge has a unique color at this point. Furthermore, the parallel edges between two vertices $R_1(a)$ and $R_3(b)$ represent the different values of $e_2 \in E_2$ for which $(a, b, e_2)$ is in $R_2$. We replace these parallel edges with a single edge representing $R_2(a, b) = \{(a, b, e_2)\in R_2: e_2\in E_2\}$ with weight equal to the cardinality of $R_2(b)$. To simplify the notation, let us call $w_a = |R_1(a)|$, $w_b = |R_3(b)|$, and $w_{a,b} = |R_2(a, b)|$.

On this new graph, consider the following linear program (LP):

\begin{center}
    Minimize $\sum_{a\in A} w_a x_a + \sum_{b\in B} w_b x_b + \sum_{a\in A, b\in B} w_{a, b} x_{a, b}$ \\
    such that 
    $y_{a, b} \leq x_{a, b} + x_a + x_b \quad \forall a\in A, b\in B$ \\
    $y_{a, b} \leq 1 \quad \forall a\in A, b\in B$ \\
    $\sum_{a\in A, b\in B} w_a w_{a, b} w_b y_{a, b} \geq k$
\end{center}

Lemma~\ref{lma:nearint} says that there is an optimal fractional solution 
to the above LP that is nearly integral. In this case, by nearly integral,
we mean it in the standard LP sense, i.e., at most one variable $x$ has a 
fractional value strictly between $0$ and $1$ while all the other variables 
$x$ have value either $0$ or $1$.

}
\section{Conclusions}
\label{sec:conclusions}
In this paper, we studied the generalized deletion propagation (\ourprob) problem for full CQs without self-joins, gave a dichotomy to decide whether \ourprob\ for a query is poly-time solvable for all $k$ and instance $D$, and also gave approximation results. Several open questions remain. First, it would be good to study the complexity for larger classes of queries involving projections and/or self-joins, other classes of aggregates like {\em sum}, and also instances with arbitrary weights on the input and output tuples. Another interesting direction is to understand the approximability of this problem even for the restricted class of full CQs without self-join. We showed that a $p$-approximation exists where $p$ is the number of tables in the query, but our study of this problem suggests that this bound is not tight. Whether a poly-time algorithm exists giving an absolute constant independent of the schema as the approximation factor remains an interesting open problem.

\balance


\bibliographystyle{ACM-Reference-Format}
\bibliography{GelDelProp-main.bbl}


\begin{appendix}

\appendix

\allowdisplaybreaks

\section{Hardness propagation for simplification steps}\label{sec:hard-propagate}
In this section we show that for the last three simplification steps in Algorithm~\ref{algo:dichotomy} that calls the $\isptime$ function recursively, if the new query $Q'$ (or one of the new queries) is NP-hard, then the query $Q$ is NP-hard. While the proofs for the fifth and sixth steps are more intuitive, the proof for the seventh step needs a careful construction. 

\textbf{Hardness propagation \casecommon .~} The following lemma shows the hardness propagation for the fifth simplification step  in Algorithm~\ref{algo:dichotomy}.
\begin{lemma} \label{lem:hard-common}
Let $\exists A \in \attr(Q)$ such that for all relations $R_i \in \rel(Q)$, $A \in \attr(R_i)$, and 
let $Q' = Q_{-A}$ be the query formed by removing $A$ from each relation in $\rel(Q)$. If $\ourprob(Q', k, D')$ is NP-hard, 
then $\ourprob(Q, k, D)$ is NP-hard. 
\end{lemma}
\begin{proof}
Given an instance of $\ourprob(Q', k, D')$, we construct an instance of $\ourprob(Q, k, D)$ as follows. 
\par
First, we claim that no relation $R_i$ in $Q'$ can have empty set of attributes. Otherwise, $A$ was the single attribute in $R_i$, which also appeared in all other relations in $Q$, \ie, $\forall R_j,~ \attr(R_i) \subseteq \attr(R_j)$. Therefore, the condition for \casesubset\ is satisfied, which is checked before \casecommon, and would have returned \true. Therefore, all relations in $Q'$ has at least one attribute.
\par
For any relation $R_i' \in \rel(Q')$, if a tuple $t'$ appears in $D'^{R_i'}$, create a new tuple $t$ in $D^{R_i}$ such that $t.A = *$ (a fixed value for all tuples and all relations in attribute $A$), and for all other attributes $E$, $t.E = t'.E$ (there is at least one such $E$). 
Hence there is a one-to-one correspondence between the tuples in the output $Q(D)$ and $Q'(D')$, and also in the input $D$ and $D'$. Therefore, a solution to $\ourprob(Q, k, D)$ of size $C$ corresponds to a solution to $\ourprob(Q', k, D')$, and vice versa. \end{proof}


\textbf{Hardness propagation for \casecooccur.~} Next we show the hardness propagation for the sixth simplification step in Algorithm~\ref{algo:dichotomy}.
\begin{lemma}\label{lem:hard-cooccur}
Let $A, B \in \attr(Q)$ such that $\rel(A) = \rel(B)$. Let $Q' = Q_{AB \rightarrow C}$ be the query by replacing $A, B$ with a new attribute $C \notin \attr(Q)$ in all relations. If $\ourprob(Q', k, D')$ is NP-hard, 
then $\ourprob(Q, k, D)$ is NP-hard.
\end{lemma}
\begin{proof}
Given an instance of $\ourprob(Q', k, D')$, we construct an instance of $\ourprob(Q, k, D)$ as follows. Consider any relation $R_i' \in \rel(Q')$ such that $C \in \attr(R_i')$, the corresponding relation $R_i$ in $Q$ has both attributes $A, B$ instead of $C$. If a tuple $t'$ appears in $D'^{R_i'}$, create a new tuple $t$ in $D^{R_i}$ such that $t.A = t.B = t'.C$, and for all other attributes $E$, $t.E = t'.E$ (\ie, both $A, B$ attributes get the value of attribute $C$ in $D$). Hence there is a one-to-one correspondence between the tuples in the output $Q(D)$ and $Q'(D')$, and also in the input $D$ and $D'$. Therefore, a solution to $\ourprob(Q, k, D)$ of size $C$ corresponds to a solution to $\ourprob(Q', k, D')$, and vice versa. 
\end{proof}

\textbf{Hardness propagation for \casedecompose.~} Now we show the hardness propagation for the seventh simplification step. Unlike the above two steps, this step requires a careful construction. For instance, consider a query  $Q(A, B, E):- R_1(A), R_2(A, B), R_3(B), R_4(E)$, which can be decomposed into two connected components $Q^1(A, B) :- R_1(A), R_2(A, B), R_3(B)$ and $Q^2(E) :- R_4(E)$. As we show later in Lemma~\ref{lem:chain-3-nphard}, $\ourprob(Q^1, k', D')$ is NP-hard for some $k', D'$, so although $Q^2$ is easy (see Section~\ref{sec:algo}), $\ourprob(Q, k, D)$ should be NP-hard for some $k, D$. An obvious approach is to assign a dummy value for $E$ in $R_4(E)$ similar to Lemma~\ref{lem:hard-common} above. However, if the number of tuples in $R_4$ is one or small in $D$, $\ourprob(Q, k, D)$ gains advantage by removing tuples from $R_4$, thereby completely bypassing $Q^1$. Therefore, a possible solution is to use a large number of tuples in $R_4$ that do not give a high benefit to delete from $R_4$, \eg, if it has more than the number of tuples in the output of $Q^1$ on $D$ restricted to $R_1, R_2, R_3$. First, we show the hardness propagation for a single application of \casedecompose. 
 
\begin{lemma} \label{lem:hard-decompose}
Let $Q$ is decomposed into maximal connected components  $Q^1, \cdots, Q^s$ where $s \geq 2$. without loss of generality (wlog.), suppose $\ourprob(Q^1, k', D')$ is NP-hard. 
Then, $\ourprob(Q, k, D)$ is NP-hard.
\end{lemma}
\begin{proof}
Given $Q^1, k', D'$, we create $k, D$ as follows. In $D$, all relations in $Q^1$ retain the same tuples. For all relations $R_i \in \bigcup_{j = 2}^s \rel(Q^j)$, we create $L$ tuples as follows: let us fix $R_i$, and let $A_1, \cdots, A_u$ be the attributes in $R_i$. $R_i$ in $D$ contains $L$ tuples of the form $t_{\ell} = (a_{1, \ell}, a_{2, \ell}, \cdots, a_{u, \ell})$, for $\ell = 1$ to $L$. Similarly, we populate the other relations. Note that in all the relations $R_i$ that an attribute $A_h$ appears in, it has $L$ values $a_{h, 1}, \cdots, a_{h_L}$. Therefore, any connected component $Q^2, \cdots, Q^s$ except $Q^1$ has $L$ output tuples (the components are maximally connected) and each input tuple of a relation participates in exactly one output tuple \emph{within} the connected component. Since $Q^1, \cdots, Q^s$ are disjoint in terms of attributes, in the output of $Q$, the outputs of each connected component will join in cross products. Suppose $Q^1(D')$ has $P$ output tuples. Then the number of output tuples in $Q(D)$ is $P \cdot L^{s-1}$. We set $k = k' \cdot L^{s-1}$ and $L = P+1$. The size of $D$ is $|D'| + L(s-1)$. Since $P \leq |D'|^{p'}$ (where $p'$ is the number of relations in $Q^1$), the increase in size of the inputs in this reduction is still polynomial in data complexity. 
Now we argue that $\ourprob(Q^1, k', D')$ has a solution of size $C$ if and only if $\ourprob(Q, k, D)$ has a solution of size $C$.
\par
(only if) If by removing $C$ tuples from $\rel(Q^1)$ we remove $k'$ tuples from $Q^1(D')$, then by removing the same  $C$ tuples we will remove $k' \cdot L^{s-1}$ output tuples from $Q(D)$ by construction as the output tuples from the connected components join by cross product, and each connected component has $L$ output tuples. 
\par
(if) Consider a solution to $\ourprob(Q, k, D)$ that removes at least $k = k'. L^{s-1}$ tuples from $Q(D)$. Note that any tuple from any relation $R_i \in \rel(Q^j)$, $j \geq 2$, can remove exactly 1 output tuple from the output of connected component $Q^j$ that it belongs to. Therefore, it removes exactly $o_2 = P. L^{s-2}$ tuples from the output. On the other hand, since we do not have any projection, any tuple from any relation $R_i \in \rel(Q^1)$ removes at least one output tuple from $Q^1(D')$, therefore at least $o_1 = L^{s-1}$ output tuples from $Q(D)$. Since $L = P+1$, $o_2 < o_1$. Therefore, if the assumed solution to $\ourprob(Q, k, D)$ removes any input tuple from any relation belonging to $Q^2, \cdots, Q^s$, we can replace it by any input tuple from the relations in $Q^1$ that has not been removed yet without increasing the cost or decreasing the number of output tuples removed. Therefore, wlog. all removed tuples appear in relations in $Q^1$. Since $k = k'. L^{s-1}$ tuples are removed from $Q(D)$, each tuple in $Q^1(D')$ removes exactly $L^{s-1}$ tuples from $Q(D)$, and the set of tuples removed from $Q(D)$ by tuples from $Q^1(D')$ are disjoint, at least $k'$ tuples must be removed from $Q^1(D')$ which gives a solution of cost at most $C$. 
\end{proof}

Although the above proof requires an exponential blow-up in the size of the query and not the data, there may be multiple application of \casedecompose\ in Algorithm~\ref{algo:dichotomy} in combination with the other simplification steps. Therefore, we need to ensure that the size of the instance $D$ that we create from $D'$ is still polynomial in data complexity for the original query that we started with. 
\par
Using ideas from Lemmas~\ref{lem:hard-common}, \ref{lem:hard-cooccur}, and \ref{lem:hard-decompose}, below we argue if any application of \casedecompose\ yields a hard query in one of the components, then the 	query we started with (say $Q_0$) is hard. Such an argument was not needed for   \casecommon\ and \casecooccur\ since in Lemma~\ref{lem:hard-common} and \ref{lem:hard-cooccur} the reductions do not yield an increase in the size of database instance. 
\par

\begin{lemma}\label{lem:hard-decompose-general}
Let $Q_0$ be the query that is given as the initial input to Algorithm~\ref{algo:dichotomy}. For any intermediate query $Q'$ in the \rectree\ of $Q_0$, if $\ourprob(Q', k', D')$ is NP-hard, 
then $\ourprob(Q_0, k, D)$ is NP-hard. 
\end{lemma} 
\begin{proof}
Consider the \rectree\   $T$ in which the simplification steps have been applied from $Q_0$ to $Q'$. 
Now consider the node $Q'$ in $T$ such that $\ourprob(Q', k', D')$ is NP-hard. The instance $D'$ is defined on the relations and attributes in $Q'$. From $D'$, we need to construct an instance $D$ on the relations and attributes in $Q_0$.  
\par
Consider the path from $Q'$ to the root $Q_0$. The relations in $Q'$ can lose attributes from the corresponding relations in $Q_0$ only by steps \casecommon\ and  \casecooccur\ along this path. 
For the the attributes that were lost on the path from $Q_0$ to $Q'$, we populate the values bottom-up from $Q'$ to $Q_0$ as follows. The relations appearing in $Q'$ have the same number of tuples in $D$ and $D'$. Moreover, (i) if two variables $A, B$ are replaced by a variable $C$  by \casecooccur, both $A$ and $B$ get the same values of $C$ in the corresponding tuples, (ii) if a variable $A$ is removed by \casecommon, we replace it by a constant value $*$ in all tuples. Let $Q$ be the query formed by extending the relations in $Q'$ with attributes by this process at the root.  
\par
Note that the relations in any non-descendant and non-ancestor node $Q_{nad}$ will be disjoint from those in $Q'$, but they can share some attributes with $Q$ only  by \casecommon: \casecommon\ is applied before \casecooccur\ so multiple attributes co-occurring in all relations will be removed by \casecommon\ not by \casecooccur; further before \casecooccur\ can be applied, the decomposition step \casedecompose\ must be called at least once.  We take all the relations that do not appear in the ancestors and descendants of $Q'$, and do a maximal connected component decomposition on them excluding the attributes that are common with $Q$. Let $s$ be the number of connected components. 
The tables in the connected components each get $L$ tuples as in the construction of Lemma~\ref{lem:hard-decompose}: consider a relation $R_i  \notin \rel(Q)$. (a) if there is an attribute $A \in \attr(R_i) \cap \attr(Q)$, assign $A = *$ in all $L$ tuples, (b) for all other attributes say $(A_1, \cdots, A_u) \in \attr(R_i) \setminus \attr(Q)$,  $R_i$ in $D$ contains $L$ tuples of the form $t_{\ell} = (a_{1, \ell}, a_{2, \ell}, \cdots, a_{u, \ell})$, for $\ell = 1$ to $L$. Therefore, the number of output tuples in each connected component is $L$ and each input tuple from each connected component can remove exactly one output tuple from the component. 

\begin{figure}[!ht]
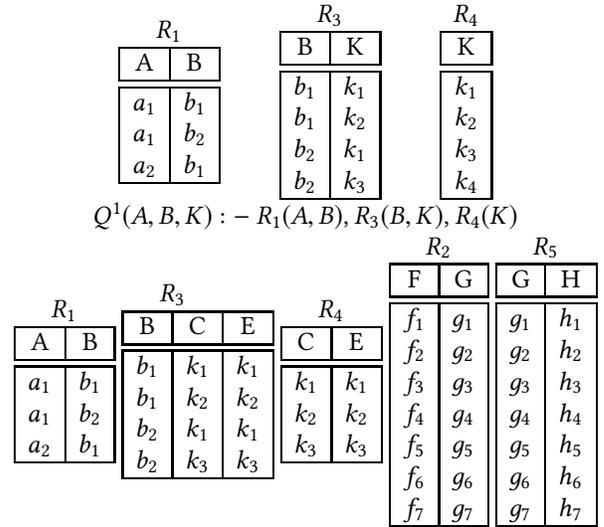

    \centering
    \begin{tabular}{|c|c|}
        \multicolumn{2}{c}{$R_1$}\\
        \hline A & B\\\hline\hline
        $a_1$ & $b_1$\\
        $a_1$ & $b_2$\\
        $a_2$ & $b_1$\\
        \hline
    \end{tabular}
    \qquad
    \begin{tabular}{|c|c|}
        \multicolumn{2}{c}{$R_3$}\\
        \hline B & K\\\hline\hline
        $b_1$ & $k_1$\\
        $b_1$ & $k_2$\\
        $b_2$ & $k_1$\\
        $b_2$ & $k_3$\\
        \hline
    \end{tabular}
    \qquad
    \begin{tabular}{|c|}
        \multicolumn{1}{c}{$R_4$}\\
        \hline K\\\hline\hline
        $k_1$\\
        $k_2$\\
        $k_3$\\
        $k_4$\\
        \hline
    \end{tabular}
    \\ $Q^1(A, B, K):-$ $R_1(A, B)$, $R_3(B, K)$, $R_4(K)$\\
    
    \begin{tabular}{|c|c|}
        \multicolumn{2}{c}{$R_1$}\\
        \hline A & B\\\hline\hline
        $a_1$ & $b_1$\\
        $a_1$ & $b_2$\\
        $a_2$ & $b_1$\\
        \hline
    \end{tabular}
    \begin{tabular}{|c|c|c|}
        \multicolumn{2}{c}{$R_3$}\\
        \hline B & C & E\\\hline\hline
        $b_1$ & $k_1$ & $k_1$\\
        $b_1$ & $k_2$ & $k_2$\\
        $b_2$ & $k_1$ & $k_1$\\
        $b_2$ & $k_3$ & $k_3$\\
        \hline
    \end{tabular}
    \begin{tabular}{|c|c|}
        \multicolumn{2}{c}{$R_4$}\\
        \hline C & E\\\hline\hline
        $k_1$ & $k_1$\\
        $k_2$ & $k_2$\\
        $k_3$ & $k_3$\\
        \hline
    \end{tabular}
    \begin{tabular}{|c|c|}
        \multicolumn{2}{c}{$R_2$}\\
        \hline F & G\\\hline\hline
        $f_1$ & $g_1$\\
        $f_2$ & $g_2$\\
        $f_3$ & $g_3$\\
        $f_4$ & $g_4$\\
        $f_5$ & $g_5$\\
        $f_6$ & $g_6$\\
        $f_7$ & $g_7$\\
        \hline
    \end{tabular}
    \begin{tabular}{|c|c|}
        \multicolumn{2}{c}{$R_5$}\\
        \hline G & H\\\hline\hline
        $g_1$ & $h_1$\\
        $g_2$ & $h_2$\\
        $g_3$ & $h_3$\\
        $g_4$ & $h_4$\\
        $g_5$ & $h_5$\\
        $g_6$ & $h_6$\\
        $g_7$ & $h_7$\\
        \hline
    \end{tabular}
    \caption{An example construction for Lemma~\ref{lem:hard-decompose-general}: Consider the full CQ $Q_0$ from Figure \ref{fig:example_T}. For the instance of $Q^1$ given in this figure, we create an instance $D$ for $Q_0$. 
    Since we replaced $C,E$ with $K$,
    both $C, E$ get the values from $K$. The output size of $Q^1$ is 6, therefore the relations in $Q^2$ that do not share any attributes with $Q^1$, so we add $6 + 1 = 7$ tuples to $R_2$ and $R_5$ of the form $(f_i, g_i)$ and $(g_i, h_i)$, for $1\leq i\leq 7$, respectively.}
    \label{fig:hard-decompose-general}
\end{figure}
An example reduction is shown in Figure~\ref{fig:hard-decompose-general}.
\par
Eventually, $Q_0$ is formed by joining $Q$ with the relations in the $s$ connected components. The attribute values $*$ or repeated values due to \casecommon\ and \casecooccur\ are not going to impact the number of output tuples.
\par
Now the same reduction as in Lemma~\ref{lem:hard-decompose} works: we set $k = k'. L^s$ where $L = P+1$, and $P = $ the number of tuples in $Q'(D')$. 
We again argue that $\ourprob(Q', k', D')$ has a solution of size $C$ if and only if $\ourprob(Q_0, k, D)$ has a solution of size $C$.
\par
(only if) If by removing $C$ tuples from $\rel(Q')$ we remove $k'$ tuples from $Q'(D')$, then by removing the corresponding $C$ tuples from $Q$, we will remove $k' \cdot L^s$ output tuples from $Q_0(D)$, by construction. The output tuples from the $s$ connected components join by cross product, and each connected component has $L$ output tuples. 
\par
(if) Consider a solution to $\ourprob(Q_0, k, D)$ that removes at least $k = k'. L^{s}$ tuples from $Q_0(D)$. Note that any tuple from any relation $R_i \notin \rel(Q)$, can remove exactly 1 output tuple from the output of connected component that it belongs to. Therefore, it removes exactly $o_2 = P. L^{s-1}$ tuples from the output. On the other hand, since we do not have any projection, any tuple from any relation $R_i \in \rel(Q)$ removes at least one output tuple from $Q'(D')$, therefore at least $o_1 = L^{s}$ output tuples from $Q_0(D)$. Since $L = P+1$, $o_2 < o_1$. Therefore, if the assumed solution to $\ourprob(Q_0, k, D)$ removes any input tuple from any relation belonging to the relations $\notin \rel(Q)$, we can replace it by any input tuple from the relations in $Q$ that has not been removed yet without increasing the cost or decreasing the number of output tuples removed. Therefore, wlog. all removed tuples appear in relations in $Q$. Since $k = k'. L^{s}$ tuples are removed from $Q_0(D)$, each tuple in the relations from $Q$ removes exactly $L^{s}$ tuples from $Q_0(D)$. Since the set of tuples removed from $Q_0(D)$ by tuples from $Q$ are disjoint, and the extension of attributes from relations in $Q'$ to those in $Q$ by repeating values or by using a constant $*$ does not have an effect on the number of output tuples, at least $k'$ tuples must be removed from $Q'(D')$, giving a solution of cost at most $C$. 
\end{proof}

\begin{algorithm}[ht]\caption{when $Q$ has two relations}\label{algo:tworel}
{\footnotesize
	\begin{codebox}
		\Procname{$\tworel(Q, k, D)$}
		\li Let $Q(\allattr_1 \cup \allattr_2) :- R_1(\allattr_1), R_2(\allattr_2)$ (wlog.)
		\li \If $\allattr_1 \cap \allattr_2 = \emptyset$
		\li \Do Let $n_1, n_2$ be the number of tuples in $R_1, R_2$ in $D$
		\li \If $n_1 \leq n_2$ \Do
		\li \Return any $\ceil{\frac{k}{n_1}}$ tuples from $R_1$
		\li \Else
		\li \Return any $\ceil{\frac{k}{n_2}}$ tuples from $R_2$ \End
		\li \Else \Do
		\li Let $\allattr_{12} = \allattr_1 \cap \allattr_2$
		\li Let $\aval_1, \cdots, \aval_g$ be all the distinct value combinations \\of attributes in $\allattr_{12}$ in $D$
		\li Let $G_i = $ set of tuples $t$ in $R_1$ such that $t.\allattr_{12} = \aval_i$, \\and let $m_i = |G_i|$, for $i = 1$ to $g$. 
		\li Let $H_i = $ set of tuples $t$ in $R_2$ such that $t.\allattr_{12} = \aval_i$, \\ and let $r_i = |H_i|$, for $i = 1$ to $g$.
		\li For $i = 1$ to $g$, let $p_i = \max(m_i, r_i)$
		\li wlog. assume that $p_1 \geq p_2 \geq \cdots \geq p_g$ \\ (else sort and re-index)
		\li Set numtup = 0, $i = 1$, $O = \emptyset$
		\li \While ${\tt numtup} \leq k$ \Do
		\li  \If $m_i \leq r_i$ \Do
		\li	 $S_i = G_i$ 
		\li \Else $S_i = H_i$ \End
		\li Include any tuple $t$ from $S_i$ to ${\tt O}$. $S_i = S_i \setminus \{t\}$
		\li \If $S_i = \emptyset$ \Do
		\li $i = i+1$ \End
		\li ${\tt numtup} = {\tt numtup} + 1$ \End
		\End
		\li \Return ${\tt O}$
		\End
	\end{codebox}
	}
\end{algorithm}

\section{\ourprob\ for Paths of Length-2 is NP-Hard}\label{sec:path-2}

\begin{lemma}\label{lem:chain-3-nphard}
For the query $Q_{2-path}(A, B) :- R_1(A), R_2(A, B), R_3(B)$, the problem $\ourprob(Q, k, D)$ is NP-hard.
\end{lemma}
\begin{proof}
We give a reduction from PVCB problem that takes as input $G=(U, V, E)$ and $k$.
 \par
 Given an instance of the PVCB problem, we construct an instance $D$ of \ourprob\ as follows for  $Q_{2-path}(A, B) :- R_1(A), R_2(A, B), R_3(B)$. For every vertex $u \in U$, we include a tuple $t_u = (u)$ in $R_1(A)$; similarly, for every vertex $v \in V$, we include a tuple $t_v = (v)$ in $R_3(B)$. For every edge $(u, v) \in E$ where $u \in U, v \in V$, we include a tuple $t_{uv} = (u, v)$ in $R_2(A, B)$. Therefore the output tuples in $Q_{2-path}(D)$ corresponds to the edges in $E$. 
 \par
 We can see that PVCB has a solution of size $C$ if and only if $\ourprob(Q_{2-path}, k, D)$ has a solution of size $C$ for the same $k$. The only if direction is straightforward. For the other direction, note that by removing a tuple of the form $t_{uv}$ exactly one tuple from the output can be removed. Hence if any such tuple is chosen by the solution of \ourprob, it can be replaced by either $t_u$ or $t_v$ without increasing cost or decreasing the number of output tuples deleted. 
\end{proof}

\begin{algorithm}[ht]\caption{When the attributes $\allattr_i$ of $R_i$ form a subset of all other relations in $Q$}\label{algo:onesubset}
{\footnotesize
	\begin{codebox}
		\Procname{$\onesubset(Q, k, D, R_i)$}
		\li Let $\aval_1, \cdots, \aval_g$ be all the distinct value combinations \\of attributes in $\allattr_{i}$ in $R_i$ in $D$\\
		(and they correspond to $g$ tuples in $R_i$)
		\li For every $\aval_j$, compute the number $m_j$ of output tuples\\ $t$ in $Q(D)$ such that $t.\allattr_i = \aval_j$. 
		\li Wlog. assume that $m_1 > m_2 > \cdots > m_g$ for $\aval_1, \aval_2, \cdots, \aval_g$
		\li Let $s$ be the smallest index such that $\sum_{j = 1}^s m_j \geq k$
		\li \Return the tuples from $R_i$ that correspond to $\aval_1, \cdots, \aval_s$ 
	\end{codebox}
	}
\end{algorithm}
\begin{algorithm}[ht]\caption{When all relations in $Q$ have a common attribute $A$ (the actual poly-time implementation is discussed in Section \ref{sec:poly-impl})}\label{algo:commonattr}
{\footnotesize
	\begin{codebox}
		\Procname{$\commonattr(Q, k, D, A)$}
		\li Let $a_1, \cdots, a_g$ be all the values of $A$ in $D$.
		\li We partition $D$ into $D_1, \cdots, D_g$, where all tuples $t$ in all tables \\
		in $D_i$ have $t.A = a_i$, $i = 1$ to $g$.
		\li 	Create a table $\optcost[1 \cdots g][1 \cdots k]$ where $\optcost[i][\ell]$ \\ 
		denotes the optimal solution to $\ourprob(Q, \ell, D)$ where the input \\ tuples can only be 
		chosen from $D_1, \cdots, D_i$.
		The \\corresponding solutions are stored in $\optsol[1 \cdots g][1 \cdots k]$.
		\li \For $i = 1$ to $g$ \Do
			\li \For $s = 1$ to $k$ \Do
				\li $\optcost[i][s] = \optcost[i-1][s]$ (also set $\optsol$)
				\li \For $m = 1$ to $s-1$ \Do
						\li Let $S_{i, m} = \computeopt[Q, m, D_i]$
						\li Let $c_{i, m} = |S_{i, m}|$
						\li \If $\optcost[i][s] > \optcost[i-1][s - m] + c_{i, m}$ \Do
							\li $\optcost[i][s] = \optcost[i-1][s - m] + c_{i, m}$\\ (and update $\optsol$)
						\End		
				\End
			\End
		\End
		\li \Return $\optsol[g][k]$.
	\end{codebox}
	}
\end{algorithm}
\begin{algorithm}[ht]\caption{When two attributes $A, B$ appear in the same set of relations in $Q$}\label{algo:cooccur}
{\footnotesize
	\begin{codebox}
		\Procname{$\cooccur(Q, k, D, A, B)$}
		\li Replace both $A, B$ by a new attribute $C \notin \attr(Q)$\\
		 in all relations where $A$ and $B$ appear
		\li Let the new query be $Q_{AB \rightarrow C}$
		\li  Initialize $D' = D$
		\li If $A, B \in \attr(R_i)$, replace all original tuple $t \in R_i$ in $D$ by \\ $t'$ in $D'$
		 such that $t'.C = (t.A, t.B)$, and \\ $t'.F = t.F$ for all other attributes $\neq A, B$ in $R_i$ 
		\li Let $S = \computeopt(Q_{AB \rightarrow C}, k, D')$ 
		\li If $S$ includes any tuple $t$ from any $R_i$ in $Q$ such that \\$A, B \in \rel(R_i)$, change 
		all such tuples to their original form \\by replacing $t.C = (a, b)$ to $t.A = a, t.B = b$
		\li \Return $S$
	\end{codebox}
	}
\end{algorithm}

\begin{algorithm}[ht]\caption{When $Q$ can be decomposed into $s > 1$ connected components of relations}\label{algo:decompose}
{\footnotesize
	\begin{codebox}
		\Procname{$\decompose(Q, k, D, Q^1, \cdots, Q^s)$}
		\li Set $Q_1 = Q^1$. 
		\li \For $i = 2$ to $s$ \Do
			\li Set $Q_2 = Q^i$.
			\li \textit{/*Compute $\optsol_{i, s}$ below for the optimal solution to}\\
			 \textit{remove at least $s$ tuples from the output of $Q_i = $ join of $Q^1, \cdots, Q^i$*/}
			\li Let $m_1 = |Q_1(D)|$, and $m_2 = |Q_2(D)|$
				\For $s = 1$ to $k$ \Do
					\li $\optsol_{i, s} = \min_{k_1, k_2 : k_1, k_2 \leq s~\textit{and}~ k_1m_2 + k_2m_1 - k_1k_2 \geq s}$\\
			\hfill $ |\optsol_{i-1, k_1}| + |\computeopt(Q_2, k_2, D_2)|$
				\End
			\li $Q_{i+1}$ = join of $Q_i$ and $Q^{i+1}$.
			\li $i = i+1$.
		\End
		\li \Return $\optsol_{s, k}$. 
	\end{codebox}
	}
\end{algorithm}

\section{Proof of Theorem~4.2}\label{sec:thm-approx}

\begin{proof}
    We prove that the reduction preserves the approximation guarantee in two steps: 1) given an instance of $\ourprob(Q, k, D)$, how to construct an instance of $k'$-PSC, and 2) given a solution to $k'$-PSC, how to recover a solution to $\ourprob(Q, k, D)$.
    
    Given the full CQ $Q$ containing $p$ relations in its body, namely $R_1, R_2, \cdots, R_p$, we create a set per input tuple in the $p$ relations, and an element per output tuple in $Q(D)$. Each set contains elements that correspond to the output tuples resulting from the join between the associated input tuple and tuples from other relations in $Q$. It is well-known that the natural join on $R_1, R_2, \cdots, R_p$ can be computed in poly-time. Moreover, exactly one tuple in each of the $p$ relations participates in the join operation that produces a particular output tuple. Therefore, each element in the $k'$-PSC instance belongs to exactly $p$ sets. As a result, the size of the $k'$-PSC instance that we create is polynomial in the data complexity of $\ourprob(Q, k, D)$. Moreover, there is a one-on-one correspondence between instances of the two problems.
    
    Lastly, given a $p$-approximate solution to $k'$-PSC, we recover a solution to $\ourprob(Q, k, D)$ by picking the tuples associated with the sets in the solution, say $I$. Observe that the sets in $I$ cover $k' = k$ elements in $U$. Thus, removing the corresponding input tuples from $\ourprob(Q, k, D)$ will intervene on at least $k$ output tuples.
\end{proof}


\end{appendix}
\end{sloppypar}

\end{document}